\documentclass[accepted]{uai2026} % after acceptance, for a revised version; 
\usepackage[american]{babel}
\usepackage[round]{natbib} % has a nice set of citation styles and commands
\usepackage{mathtools} % amsmath with fixes and additions
\usepackage{booktabs} % commands to create good-looking tables
\usepackage{tikz} % nice language for creating drawings and diagrams

\usepackage{bm}
\usepackage{bbm}
\usepackage{amsmath}
\usepackage{amsthm}
\usepackage{mathtools}
\usepackage{amssymb}
\newtheorem{definition}{Definition}[section]
\newtheorem{remark}{Remark}[section]

\newtheorem{proposition}{Proposition}[section]

\newtheorem{lemma}{Lemma}[section]
\usepackage{algorithm}
\usepackage{algorithmic}
\usepackage{caption}
\usepackage{subcaption}

\title{Multiproposal Elliptical Slice Sampling}

\author[1,2]{\href{guillermina.senn@ntnu.no}{Guillermina Senn}{}}
\author[2]{Nathan Glatt-Holtz}
\author[2]{Giulia Carigi}
\author[3]{Andrew Holbrook}
\author[1]{Håkon Tjelmeland}
\affil[1]{%
    Dept. of Mathematical Sciences\\
    NTNU\\
    Trondheim, Norway
}
\affil[2]{%
    Dept. of Statistics\\
    Indiana University Bloomington\\
    USA
}
\affil[3]{%
    Dept. of Biostatistics\\
    Fielding School of Public Health, UCLA\\
    USA
  }
  
\begin{document}
\maketitle

\begin{abstract}
We introduce Multiproposal Elliptical Slice Sampling, a self-tuning multiproposal Markov chain Monte Carlo method for Bayesian inference with Gaussian priors. 
Our method generalizes the Elliptical Slice Sampling algorithm by 1) allowing multiple candidate proposals to be sampled in parallel at each self-tuning step, and 2) basing the acceptance step on a distance-informed transition matrix that can favor proposals far from the current state.
This allows larger moves in state space and faster self-tuning, at essentially no additional wall clock time for expensive likelihoods, and results in improved mixing.  
We additionally provide theoretical arguments and experimental results suggesting dimension-robust mixing behavior, making the algorithm particularly well suited for Bayesian PDE inverse problems.
\end{abstract}

\section{Introduction}\label{sec:introduction}
Markov chain Monte Carlo (MCMC) algorithms are the workhorse of Bayesian inference for sampling from a general posterior distribution. A reason for their success is the theoretical guarantee that the algorithm will generate samples from the exact posterior distribution in the limit of infinitely many iterations. To keep the number of iterations reasonably small in practice, the algorithm should be efficient in converging to and exploring the target distribution.

Both the convergence and exploration behavior of many MCMC methods depend on algorithmic parameters. To achieve acceptable efficiency, an appropriate value for these parameters must be found, a procedure referred to as tuning. Tuning is in general an algorithm-specific, time-consuming step that often requires multiple preliminary runs. 
Adaptive MCMC \citep{Rosenthal2011} offers a feasible alternative to tuning for some algorithms, based on the premise of diminishing adaptations, which is proven to not alter the desired stationary distribution. However, the adaptation might be slow or inefficient in practice. 
Tuning-free MCMC algorithms are therefore extremely convenient for the practitioner. 

A famous example of a tuning-free MCMC algorithm is the No-U-Turn-Sampler \citep{Hoffman2014}, that automates the choice of trajectory length in Hamiltonian Monte Carlo (HMC; \citet{Duane1987}). Despite of NUTS now allowing a tuning-free efficient exploration of a high-dimensional general posterior distribution, gradients still need to be computed.  A viable alternative for the class of Bayesian posteriors with Gaussian priors is embodied by the gradient-free, self-tuning Elliptical Slice Sampling algorithm from \citet{Murray2010}. 

Elliptical Slice Sampling (ESS) is  defined for target distributions of the form
% $\pi(\bm{x}),\bm{x}\in\mathbb{R}^n$ defined by
\begin{align} \label{eq:target}
\pi(\bm{x}) \propto L(\bm{x}) \, \mathcal{N}(\bm{x}; \bm{\mu}, \bm{\Sigma}), \quad \bm{x}\in\mathbb{R}^n,
\end{align}
where $L(\bm{x})$ is a non-negative likelihood function. The prior-informed proposal in ESS resembles that of \citet{Neal1999} and the later infinite-dimensional generalization in \citet{Cotter2013}, the so-called pre-conditioned Crank-Nicholson (pCN) algorithm, and is therefore also defined as a linear combination of the current state and a sample from the Gaussian prior, where the weights of the linear combination govern the aggressiveness of the proposal. 
The ESS proposal improves upon the previous, first by allowing sampling from the full ellipse defined by the current state and the prior draw, and second by using Slice Sampling \citep{Neal2003} to automatically determine appropriate weights for the linear combination.

What appears to be previously unnoted about ESS, and extends to our generalizations herein, is that the pCN-like structure of the proposals on the ellipse strongly suggest that this algorithm can be made direct sense of in an infinite-dimensional setting. This means in practice that we can resolve~\eqref{eq:target} without suffering from the curse of dimensionality. 
We will substantiate these claims further in Section~\ref{sec:computational} and with a non-parametric numerical example in Section~\ref{sec:results}. 

\begin{algorithm}[t]
\caption{Elliptical Slice Sampler (modified)}
\label{alg:ess}
\begin{algorithmic}[1]
\REQUIRE Current state $\bm{x}$, sample from the prior $\bm{\nu}$, likelihood $L(\bm{x})$. 
\STATE Sample auxiliary variable $u \sim \mathcal{U}(0, 1)$ to define slice level $y \gets u \cdot L(\bm{x})$.
\STATE Sample $\alpha \sim \mathcal{U}[0, 2\pi)$ .
\STATE Initialize shrinking bounds $l_0(\alpha) \gets 0$, $r_0(\alpha) \gets 2\pi$.
\REPEAT
    \STATE Sample $\varphi_i \sim 
    \mathcal{U}\big(l_{i-1}(\alpha,\bm{\varphi}_{1:i-1}),
    r_{i-1}(\alpha,\bm{\varphi}_{1:i-1})\big]$.
    \STATE Propose $\tilde{\bm{x}} =
    \bm{x}\cos(\varphi_i-\alpha) +
    \bm{\nu}\sin(\varphi_i-\alpha)$.
    \IF{$L(\tilde{\bm{x}}) > y$}
        \STATE Accept $\bm{x} \gets \tilde{\bm{x}}$
    \ELSE
        \IF{$\varphi_i < \alpha$}
            \STATE Shrink left bound: $l_{i}(\alpha, \bm{\varphi}_{1:i}) \gets \varphi_i$, and keep $r_{i}(\alpha, \bm{\varphi}_{1:i}) = r_{i-1}(\alpha, \bm{\varphi}_{1:i-1})$.
        \ELSE
            \STATE Shrink right bound: $r_{i}(\alpha, \varphi_{1:i}) \gets \bm{\varphi}_i$, and keep $\ell_{i}(\alpha, \bm{\varphi}_{1:i}) = \ell_{i-1}(\alpha, \bm{\varphi}_{1:i-1})$.
        \ENDIF
    \ENDIF
\UNTIL{acceptance}
\end{algorithmic}
\end{algorithm}

In this article, we consider a slightly modified version of the original ESS algorithm. We present the method in Algorithm~\ref{alg:ess} and describe it next in words.
Letting $\bm{x}$ denote the current state, we generate a new state in two steps. First, independently of $\bm{x}$, a $\bm{\nu}\sim \mathcal{N}(\bm{\mu}, \bm{\Sigma})$ is generated from the Gaussian prior, and an angle $\alpha\sim\mbox{Unif}(0,2\pi]$ is sampled. Together with $\bm{x}$, the sample $\bm{\nu}$ defines an ellipse centered at $\bm{\mu}\in\mathbb{R}^n$, that can be parametrically described by
\begin{equation}\label{eq:ellipse}
\begin{split}
    \widetilde{\bm{x}} &= (\bm{x}-\bm{\mu})\cos (\theta - \alpha) + (\bm{\nu}-\bm{\mu}) \sin (\theta - \alpha) + \bm{\mu}, \\
    \widetilde{\bm{\nu}} &= (\bm{\nu}-\bm{\mu})\cos (\theta - \alpha) - (\bm{x}-\bm{\mu}) \sin (\theta - \alpha) + \bm{\mu},
\end{split}
\end{equation}
for $\theta \in (0, 2\pi]$.
The angle $\alpha$ determines the position of  $\bm{x}$ on the ellipse and is located $\alpha$ radians anticlockwise from 0.
This redefinition of the algorithm achieves a natural repositioning of the initial length-$2\pi$ interval around the angle associated to the current state. As a consequence, shrinking occurs immediately after the first rejected angle, and simplifies the proof of invariance.
Next, an angle $\theta$ is sampled by a procedure to be described, and the resulting $\widetilde{\bm{x}}$ is the next state of the Markov chain.

The angle $\theta$ is drawn through the following sequence of operations. First, a scalar $y$ is generated from a uniform distribution on $(0,L(\bm{x})]$, defining a \textit{slice} $\mathcal{S}:=\{\bm{x}:L(\bm{x}) \geq y\}$ (see \citet{Neal2003} for an extended explanation on slice sampling). 
Next, an iterative procedure is performed until an acceptance criterion is met. Letting $i$ denote the iterator index, and starting with $i=1$, we sample an angle $\varphi_i$ uniformly on the interval $(\ell_{i-1}(\alpha, \bm{\varphi}_{1:i-1}),r_{i-1}(\alpha, \bm{\varphi}_{1:i-1})]$, with initial conditions $\ell_{0}(\alpha) = 0$ and $r_{0}(\alpha)=2\pi$, and evaluate the likelihood in~\eqref{eq:target} in the $\widetilde{\bm{x}}$ defined by $\theta=\varphi_i$ in~\eqref{eq:ellipse}. If $L(\widetilde{\bm{x}})\geq y$, we set $\theta=\varphi_i$ and finish. If $L(\widetilde{\bm{x}})< y$, a shrunken interval $(\ell_i(\alpha, \bm{\varphi}_{1:i}),r_i(\alpha, \bm{\varphi}_{1:i)}]$ is defined according to the ESS shrinking rule in Algorithm~\ref{alg:ess}.
Then, the value of the iterator index $i$ is incremented and the next iteration starts. The procedure continues until the generated $\varphi_i$ leads to a proposal $\widetilde{\bm{x}}$ that falls on the slice. For continuous likelihoods, the algorithm is guaranteed to terminate because $\alpha$ is always within the interval. 

In this article, we introduce Multiproposal Elliptical Slice Sampling (MESS), a multiproposal extension of ESS that allows $M$ angles to be sampled in parallel at each iteration. MESS can enjoy improved efficiency over ESS ($M=1$) by the following three mechanisms. First, searching for a slice with $M>1$ simultaneous angles in the same interval increases the probability of finding a segment of the slice far away from the current state, as illustrated in Figure~\ref{fig:ellipses}(a). Second, using all $M$ angles to shrink a bracket results in faster shrinking, as illustrated in Figure~\ref{fig:ellipses} (b-c). We note that both first and second are obtained at no additional cost per iteration compared to the case $M=1$ when parallelization is used and the likelihood cost dominates over the parallelization overhead. Third, the acceptance step can be designed based on a distance-informed transition matrix to favor proposals far from the current state.

The rest of the article is organized as follows. In Section~\ref{sec:mess} we describe MESS and give a mathematically precise proof of invariance, that becomes a proof for ESS by setting $M=1$. In Section~\ref{sec:transition} we explain how to construct the transition matrix in the acceptance step. In Section~\ref{sec:computational} we describe how MESS scales computationally and statistically. In Section~\ref{sec:related} we review the use of parallelization and multiple proposals in the ESS literature. In Section~\ref{sec:results} we test MESS on the logistic regression studied in \citet{Murray2010} and on two Bayesian inverse problems: the blind image deconvolution problem with a multimodal posterior from \citet{Senn2026}, and a non-parametric Bayesian inverse problem inspired by the sparse indirect measurement of turbulent fluid flow. We conclude and give directions of future research in Section~\ref{sec:conclusion}.

\section{Multiproposal Elliptical Slice Sampling}\label{sec:mess}
In the following we first describe in detail the updating procedure used in MESS, and thereafter give a mathematical precise proof that $\pi(\bm{x})$ is an invariant distribution for the Markov chain defined by this updating procedure.

\subsection{The updating procedure}\label{sec:updating}
\begin{algorithm}[t]
\caption{Multiproposal Elliptical Slice Sampler}
\label{alg:mess}
\begin{algorithmic}[1]
\REQUIRE Current state $\bm{x}$, sample from the prior $\bm{\nu}$, likelihood $L(\bm{x})$, number of proposals $M$. 
\STATE Sample auxiliary variable $u \sim \mathcal{U}(0, 1)$ to define slice level $y \gets u \cdot L(\bm{x})$.
\STATE Sample $\alpha \sim \mathcal{U}(0, 2\pi]$ to reposition the initial interval randomly around the current state.
\STATE Initialize shrinking bounds $l_0(\alpha) \gets 0$, $r_0(\alpha) \gets 2\pi$.
\REPEAT
    \STATE Sample $\varphi_{ij} \sim \mathcal{U}(l_{i-1}(\alpha, \bm{\varphi}_{1:i-1}), r_{i-1}(\alpha, \bm{\varphi}_{1:i-1})]$
    \STATE Propose $\tilde{\bm{x}}^{(j)} = \bm{x} \cos (\varphi_{ij} - \alpha) + \bm{\nu} \sin(\varphi_{ij} - \alpha)$ for $j=1,\ldots,M$.
    \IF{$L(\tilde{\bm{x}}^{(j)}) < y$ for $j=1, \ldots, M$}
        \STATE Redefine $l_{i}(\alpha, \bm{\varphi}_{1:i})$ and $r_{i}(\alpha, \bm{\varphi}_{1:i})$ using the shrinking rule in Definition~\ref{def:shrinking_rule}.
    \ELSE
        \STATE Randomly choose $m$ from $\{j: L(\tilde{\bm{x}}^{(j)}) > y\}.$
        \STATE Accept $\bm{x} \gets \tilde{\bm{x}}^{(m)}$ and \textbf{return}
    \ENDIF
    \UNTIL{acceptance}
\end{algorithmic}
\end{algorithm}
The MESS pseudocode is given in Algorithm~\ref{alg:mess}.
Given $\bm{x}$, $\bm{\nu}$, $\alpha$ and $y$ generated as in Algorithm~\ref{alg:ess}, the following  procedure is used to sample the angle $\theta$. Starting with $i=1$, a vector $\bm{\varphi}_i = (\bm{\varphi}_{i1}, \ldots, \bm{\varphi}_{iM})$ of $M$ angles is sampled independently and uniformly on the interval $(\ell_{i-1}(\alpha, \bm{\varphi}_{1:i-1}),r_{i-1}(\alpha, \bm{\varphi}_{1:i-1})]$, with initial conditions $\ell_{0}(\alpha) = 0$ and $r_{0}(\alpha)=2\pi$. 
Next, we check which (if any) of the $M$ angles leads to a proposal with likelihood above the threshold $y$. 
To perform this verification, it is useful to define the set $\mathcal{A}_i$ with the indexes of the \textit{valid} angles sampled at iteration $i$, 
\begin{equation}\label{eq:A}
\begin{split}
\mathcal{A}_i
:=\; &\mathcal{A}(\bm{x}, \bm{\nu}, y, \alpha, \bm{\varphi}_i)  \\
=\; &\bigl\{ j :
L\!\bigl(\bm{x}\cos(\varphi_{ij}-\alpha)
      + \bm{\nu}\sin(\varphi_{ij}-\alpha)\bigr) \ge y, \\
&\qquad j = 1,\ldots,M \bigr\}.
\end{split}
\raisetag{6pt}
\end{equation}

If no valid angles are found, i.e. $|\mathcal{A}_i|=0$, a shrunken interval $(\ell_i(\alpha, \bm{\varphi}_{1:i}),r_i(\alpha, \bm{\varphi}_{1:i)}]$ is defined according to the shrinking rule described below.
\begin{definition}[MESS Shrinking rule]\label{def:shrinking_rule}
Let $\alpha \in (0, 2\pi]$. For some positive $k$, the shrinking rule for 
$i=1,2, \ldots$ and $j=1,\ldots,M$ is phrased as the recursive updates
\begin{subequations}\label{eq:update_sh}
\begin{align}
\hspace*{-0.1cm}
\ell_i(\alpha, \bm{\varphi}_{1:i}) 
&= \max\{\ell_{i-1}(\alpha, \bm{\varphi}_{1:i-1}), 
\varphi_{ij}: \varphi_{ij} < \alpha \},
\label{eq:lUpdate_sh}\\
\hspace*{-0.1cm}
r_i(\alpha, \bm{\varphi}_{1:i}) 
&= \min\{r_{i-1}(\alpha, \bm{\varphi}_{1:i-1}), 
\varphi_{ij}: \varphi_{ij} \geq \alpha \},
\label{eq:rUpdate_sh}
\end{align}
\end{subequations}
of the intervals, together with the initial conditions 
$\ell_0(\alpha,\{\}) = \ell_0(\alpha)=0$ and 
$r_0(\alpha,\{\})=r_0(\alpha)=2\pi$, 
where $\bm{\varphi}_{1:0} = \{\}$, 
$\bm{\varphi}_{1:i} = \{\bm{\varphi}_1, \ldots, \bm{\varphi}_i\}$.
\end{definition}

With this rule, the current interval is shrunken to the smallest interval around $\alpha$ defined by the two (rejected) angles in $\bm{\varphi}_{i}$ that are closest to $\alpha$ from the left and from the right. If no angle in $\bm{\varphi}_{i}$ is smaller than $\alpha$, the current left limit of the interval is unchanged, and analogously for the right limit. We note that setting $M=1$ in Definition~\ref{def:shrinking_rule} gives the ESS shrinking rule. 

\begin{figure}[t]
    \centering
    \includegraphics[width=0.95\linewidth, trim={0.5cm 0cm 0.2cm 0.2cm}, clip]{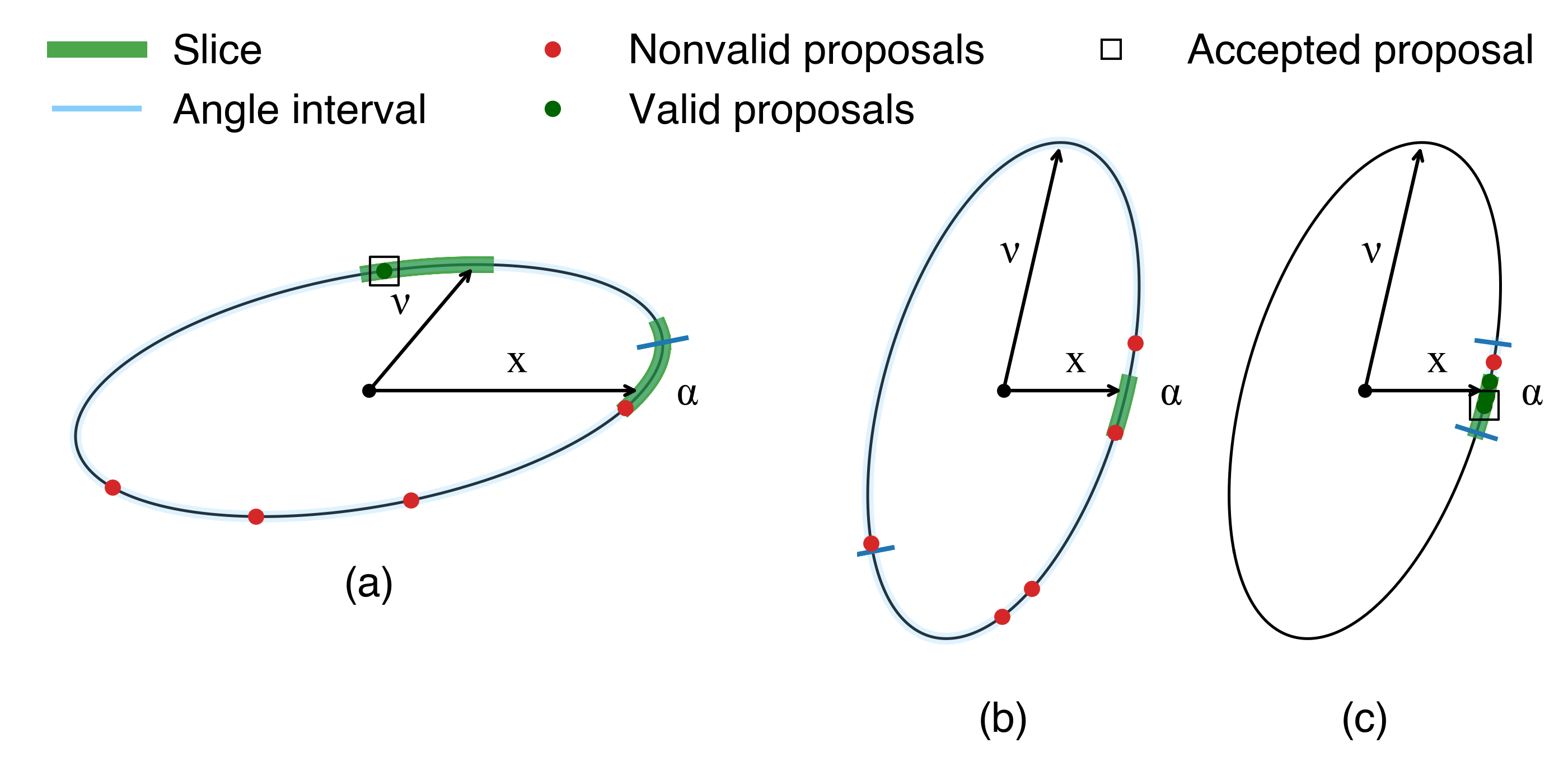}
    \caption{
    Ellipses at different iterations of a MESS ($M=5$) chain for posterior estimation of the non-parametric example in Section~\ref{sec:results}. (a) The slice is discontinuous, and one of five proposals lies on a segment excluding the current state. (b, c) None of the first five angles is valid, shrinking the interval; four subsequent proposals are valid, and one is accepted.}
    \label{fig:ellipses}
\end{figure}

If at least one valid angle is found at iteration $i$, we denote by $k$ the iteration number at which the stop criterion $|\mathcal{A}_k| > 0$ is met, define the vector 
\begin{equation}\label{eq:valid_angles}
    \bm{\varphi}_{k, \mathcal{A}_{k}} = (\varphi_{kj}:j\in \mathcal{A}_k)
\end{equation}
with the valid angles, and choose one among them using probabilities specified by a transition matrix whose construction we describe next.

The transition matrix has to include probabilities for going from the current state to any valid angle sampled at iteration $k$. Therefore we define the augmented vector $\bm{\psi} = (\alpha, \bm{\varphi}_{k, \mathcal{A}_k})$ with all $B = |\mathcal{A}_k| + 1$ valid angles. The initial angle $\alpha$ takes the position 0 in this new vector. 
% \textcolor{blue}{The first element of $\bm{\psi}$ is defined to be the angle corresponding to the current state.} 
Next, and using $^{\uparrow}$ to denote that a vector has been sorted in increasing order of magnitude, we define
\begin{equation*}
    \bm{\psi}^{\uparrow} := (\psi_{(1)}, \ldots, \psi_{(B)}), \quad \psi_{(1)} \le \psi_{(2)} \le \cdots \le \psi_{(B)},
\end{equation*}
the sorted vector of valid angles, and construct the following transition matrix for the elements of $\bm{\psi}^{\uparrow}$. 
\begin{definition}[Transition matrix] \label{def:transition_matrix}
The transition matrix for the elements of the ordered vector $\bm{\bm{\psi}}^{\uparrow}$ of candidate angles at iteration $k$ is a double-stochastic matrix
\begin{equation*}
    \bm{P}(\bm{\psi}^{\uparrow}) \;=\;\left[P_{rs}(\bm{\psi}^{\uparrow})\right]_{r,s=1}^B
\end{equation*}
of order $B$ whose entries $P_{rs}(\bm{\psi}^{\uparrow})$ are the probabilities for transitioning from angle $\bm{\psi}_{(r)}$ to angle $\bm{\psi}_{(s)}$. The matrix is restricted to have diagonal entries $P_{rs}(\bm{\psi}^{\uparrow})=0$ to forbid staying in the current state. 
\end{definition}
\begin{remark}[Invariance]\label{rem:invariance}
    The ordered vector $\bm{\psi}^{\uparrow}$ and therefore $\bm{P}(\bm{\psi}^\uparrow)$ are invariant to permutations of the elements of $\bm{\psi}$.
\end{remark}

To specify this last step of the algorithm we introduce notation for the mapping of the elements of $\bm{\psi}$ to the elements of $\bm{\psi}^\uparrow$, and vice versa.
We define the ordering function $h$ that returns the position that each angle in $\bm{\psi}$ has in the sorted vector $\bm{\psi}^\uparrow$, i.e.
\begin{equation}
    h(\psi_r; \bm{\psi})=
        \text{position of $\psi_{r}$ in $\bm{\psi}^{\uparrow}$},
\end{equation}
and correspondingly the inverse function $h^{-1}$ that returns the label that $\psi_{(r)}$ has in the unsorted $\bm{\psi}$,
\begin{equation}
    h^{-1}(\psi_{(r)}; \bm{\psi}^{\uparrow})= \text{label of } \psi_{(r)}.
\end{equation}
Using this notation, the acceptance procedure is as follows. First, obtain the position $r=h(\alpha; \bm{\psi})$ of $\alpha$ in $\bm{\psi}^{\uparrow}$, then sample an integer $s \in \{1, \ldots, B\}$ according to the probabilities in row $r$ of the transition matrix, and finally recover the index $m=h^{-1}(\psi_{(s)}; \bm{\psi}^{\uparrow})$ of the accepted angle, thus denoted as $\varphi_{km}$. Then set $\theta=\varphi_{km}$ and finish. 

Having described the updating procedure in MESS, we formulate the proposition below about the chain's invariant distribution. We give a summary of its proof in Section~\ref{sec:validity} and more details in Appendix~\ref{ap:proof_mess}.
\begin{proposition}[Invariance] \label{thm:stationary_mess} 
    The Multiproposal Elliptical Slice Sampling updating procedure leaves the target distribution $\pi(\bm{x})$ invariant. 
\end{proposition}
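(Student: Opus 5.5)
We will show invariance by exhibiting a measure-preserving involution on an extended state space, in the spirit of the standard ESS and slice-sampling arguments. The extended state is $(\bm{x},\bm{\nu},y,\alpha,\bm{\varphi}_{1:k},s)$, with joint density
\[
\pi(\bm{x})\,\mathcal{N}(\bm{\nu};\bm{\mu},\bm{\Sigma})\,\frac{\mathbbm{1}\{0<y\le L(\bm{x})\}}{L(\bm{x})}\,\frac{1}{2\pi}\,q(\bm{\varphi}_{1:k}\mid \bm{x},\bm{\nu},y,\alpha)\,P_{rs}(\bm{\psi}^{\uparrow}).
\]
Here $q$ is the density of the shrinkage sequence: a product over $i\le k$ of $(r_{i-1}-\ell_{i-1})^{-M}$ times the indicators $|\mathcal{A}_i|=0$ for $i<k$ and $|\mathcal{A}_k|>0$. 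The $\bm{x}$-marginal of this density is $\pi$. It therefore suffices to find a bijection $T$ that maps $(\bm{x},\bm{\nu},\alpha,\dots)$ to $(\widetilde{\bm{x}},\widetilde{\bm{\nu}},\theta,\dots)$ with $\theta=\varphi_{km}$, has unit Jacobian, preserves this density, and under which the pair $(r,s)$ becomes $(s,r)$ in the new row of $\bm{P}$. Summing over $s$ and using that $\bm{P}$ is doubly stochastic then gives the $\widetilde{\bm{x}}$-marginal equal to $\pi$.

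The first step is the Gaussian part. The rotation \eqref{eq:ellipse} by angle $\theta-\alpha$ is orthogonal on the pair of centred vectors. It therefore preserves the product prior $\mathcal{N}(\bm{x};\bm{\mu},\bm{\Sigma})\mathcal{N}(\bm{\nu};\bm{\mu},\bm{\Sigma})$ and has unit Jacobian. Moreover, the ellipse traced by $(\widetilde{\bm{x}},\widetilde{\bm{\nu}})$, with the starting point placed at angle $\theta$, is the same as the one traced by $(\bm{x},\bm{\nu})$ with the starting point at angle $\alpha$: the point at absolute angle $\phi$ is identical in both parametrizations. Hence the likelihood of every proposal, and thus every set $\mathcal{A}_i$, depends only on the absolute angles. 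Since $y$ is kept fixed, the slice factor $\pi(\bm{x})\mathbbm{1}\{y\le L(\bm{x})\}/L(\bm{x})$ times the prior reduces to $\mathcal{N}\,\mathcal{N}\,\mathbbm{1}\{y\le L\}$. This is symmetric because $L(\widetilde{\bm{x}})\ge y$, as $m$ is a valid angle, and $L(\bm{x})\ge y$ trivially.

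The second step, which I expect to be the main obstacle, is the reverse path of the angles. We map $\bm{\varphi}_{1:k}$ to the same collection of angles, except that in the last block the entry $\varphi_{km}$ is replaced by $\alpha$, and the new start angle is $\theta$. We must show that the shrinking intervals $(\ell_i(\theta,\cdot),r_i(\theta,\cdot)]$ for $i<k$ coincide with $(\ell_i(\alpha,\cdot),r_i(\alpha,\cdot)]$. The argument is by induction on $i$, using Definition~\ref{def:shrinking_rule}. All rejected angles in $\bm{\varphi}_1,\dots,\bm{\varphi}_{k-1}$ lie outside the slice, and $\theta$ itself is in the slice within the final interval $(\ell_{k-1},r_{k-1}]$. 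Because each interval is bounded by the nearest rejected angles on either side of $\alpha$, no rejected angle lies between $\alpha$ and $\theta$. Consequently the nearest-left and nearest-right rejected angles relative to $\theta$ are the same as those relative to $\alpha$, at every stage.

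The care needed here is the half-open endpoint convention and the set $\{\varphi_{ij}<\alpha\}$ versus $\{\varphi_{ij}\ge\alpha\}$: a tie of measure zero can be ignored. Given equal intervals, the uniform densities $(r_{i-1}-\ell_{i-1})^{-M}$ coincide, the rejection indicators coincide, and $|\mathcal{A}_k|>0$ holds in both directions. Finally, the valid set in the last block, augmented by the start angle, is the same multiset $\bm{\psi}$. By Remark~\ref{rem:invariance}, $\bm{\psi}^{\uparrow}$ and $\bm{P}$ are unchanged, and the start position becomes $s$ while the chosen position becomes $r$.

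Combining the two steps, the forward density carries $P_{rs}$ and the reverse carries $P_{sr}$, so $T$ alone is not density-preserving. Instead, integrate over everything except the target and use $\sum_r P_{rs}=1$, i.e. the column-stochasticity from Definition~\ref{def:transition_matrix}. This yields $\int \pi(\bm{x})K(\bm{x},d\widetilde{\bm{x}})\,d\bm{x}=\pi(\widetilde{\bm{x}})\,d\widetilde{\bm{x}}$, which is the claimed invariance. Setting $M=1$, where $B=2$ and $\bm{P}$ is the swap, recovers reversibility of ESS.
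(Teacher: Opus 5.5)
Your proposal is correct and follows essentially the same route as the paper. It uses the same extended space and swap map $T$: rotate $(\bm{x},\bm{\nu})$, exchange $\alpha$ and $\varphi_{km}$, and keep $y$, $k$ and all other angles. It then relies on the unit Jacobian, rotation invariance of the Gaussian pair, the flipping of the shrinking intervals (your ``no rejected angle lies between $\alpha$ and $\theta$'' is exactly the paper's intermediate equivalence in Lemma~\ref{lem:recursive_mess}), and invariance of $\bm{\psi}^{\uparrow}$ via Remark~\ref{rem:invariance}, before summing out the selection index with row-stochasticity forward and column-stochasticity in reverse. Indexing the choice by sorted position $s$ rather than label $m$, and using the centred rotation, make no difference to the argument.
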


\subsection{Validity of the updating procedure}\label{sec:validity}
To show that $\pi(\bm{x})$ is an invariant distribution for the Markov chain defined by the MESS procedure, we assume the current state $\bm{x}$ to be distributed according to the target distribution and show that the updating procedure keeps the marginal distribution for $\bm{x}$ invariant. 

The strategy of the proof is as follows.
Given the current state $\bm{x}$, the algorithm samples values for the random variables $\bm{\phi}=(\bm{\nu}, y, \alpha, k, \bm{\varphi}_{1},\ldots,\bm{\varphi}_{k})$ and $m$ generated in the updating procedure, with support specified by the constraint set
\begin{equation} \label{eq:support}
\begin{aligned}
\mathcal{F}
= \Bigl\{ (\bm{x},\bm{\phi}, m) :
&\ \bm{x}, \bm{\nu} \in \mathbb{R}^n,\;
   y \in (0,L(\bm{x})], \; k>0, \\
&\ \alpha \in (0,2\pi],\;
   \varphi_{ij} \in (\ell_{i-1}, r_{i-1}], \\
&\ |\mathcal{A}_i|=0\ \forall i<k,\;
   m \in \mathcal{A}_k
\Bigr\}.
\end{aligned}
\end{equation}

In the first step of the proof, we find the density of the realized forward path. Then, we write a one-to-one transformation $T$ that maps $(\bm{x}, \bm{\phi}, m)$ to $(\widetilde{\bm{x}}, \widetilde{\bm{\phi}}, \widetilde{m})$, where 
\begin{align}
    \widetilde{\bm{\phi}}=(\widetilde{\bm{\nu}},\widetilde{y}, \widetilde{\alpha}, \widetilde{k}, \widetilde{\bm{\varphi}}_1,\ldots,\widetilde{\bm{\varphi}}_{\widetilde{k}-1}, \widetilde{\bm{\varphi}}_{\widetilde{k}}) 
\end{align} 
and $\widetilde{m}$ form a set of values that gives a corresponding inverse move, i.e. a move starting with $\widetilde{\bm{x}}$ as the current state and ending up with $\bm{x}$ as the updated state.
Next, we use the transformation formula to find the joint density for the realized reverse path. In this step we employ two lemmas (formulated below) that describe properties of the support in~\eqref{eq:support} and of the intervals defined by the shrinking procedure in Definition~\ref{def:shrinking_rule}.
Finally, we sum the forward and reverse joint densities over $m$ and $\widetilde{m}$ respectively, and use the double-stochasticity of the transition matrix to show that $\bm{x}$ and $\widetilde{\bm{x}}$ are marginally identically distributed. 

We now describe the deterministic transformation $T:(\bm{x},\bm{\phi}, m)\to(\widetilde{\bm{x}},\widetilde{\bm{\phi}}, \widetilde{m})$. The slice level must be the same in both paths, therefore $\widetilde{y} = y$. To have a one-to-one transformation, we must use $\widetilde{k}=k$, and it is natural to let the accepted (extended) state $(\widetilde{\bm{x}}, \widetilde{\bm{\nu}})$ and the angles in the reverse path be given as
\begin{equation}\label{eq:transformation}
\begin{split}
\widetilde{\bm{x}} &= \bm{x}\cos(\varphi_{km} - \alpha) + \bm{\nu}\sin(\varphi_{km} - \alpha),\\
\widetilde{\bm{\nu}} &= \bm{\nu}\cos(\varphi_{km} - \alpha) -\bm{x}\sin(\varphi_{km} - \alpha)\\
\widetilde{\alpha} &= \varphi_{km},\\
\widetilde{\varphi}_{ij} &= \varphi_{ij} \mbox{~for $i=1,\ldots,k-1, \; j=1,\ldots,M$},\\
\widetilde{\varphi}_{kj} &= \varphi_{kj} \mbox{~for $j \ne m$},\\
\widetilde{\varphi}_{\widetilde{k}\widetilde{m}} &= \alpha.
\end{split}
\end{equation}
For the reverse path to be such that the accepted angle is $\alpha$, we must choose $\widetilde{m}=m$.  
The inverse transformation $T^{-1}:(\widetilde{\bm{x}},\widetilde{\bm{\phi}},\widetilde{m})\rightarrow (\bm{x},\bm{\phi},m)$ is then given by
\begin{equation}\label{eq:transformation_inverse}
\begin{split}
\bm{x} &= \widetilde{\bm{x}}\cos(\widetilde{\varphi}_{\widetilde{k}\widetilde{m}} - \widetilde{\alpha}) + \widetilde{\bm{\nu}}\sin(\widetilde{\varphi}_{\widetilde{k}\widetilde{m}} - \widetilde{\alpha}),\\
\bm{\nu} &= \widetilde{\bm{\nu}}\cos(\widetilde{\varphi}_{\widetilde{k}\widetilde{m}} - \widetilde{\alpha}) -\widetilde{\bm{x}}\sin(\widetilde{\varphi}_{\widetilde{k}\widetilde{m}} - \widetilde{\alpha}),\\
\alpha &= \widetilde{\varphi}_{\widetilde{k}\widetilde{m}},\\
\varphi_{ij} &= \widetilde{\varphi}_{ij} \mbox{~for $i=1,\ldots,\widetilde{k}-1, \; j=1,\ldots,M$},\\
\varphi_{kj} &= \widetilde{\varphi}_{\widetilde{k}j} \mbox{~for $j \ne \widetilde{m}$},\\
\varphi_{km} &= \widetilde{\alpha}.
\end{split}
\end{equation}
We denote by $\widetilde{\bm{\psi}}$ the vector of valid angles at iteration $\widetilde{k}$ computed on $T^{-1}(\widetilde{\bm{x}},\widetilde{\bm{\phi}},\widetilde{m})$.
\begin{remark}\label{rem:psi_transformed}
The vector $\widetilde{\bm{\psi}}$ has $\widetilde{\varphi}_{\widetilde{k}\widetilde{m}}$ in position $h^{-1}(\alpha, \bm{\psi}^{\uparrow})=0$, $\widetilde{\alpha}$ in position $h^{-1}(\varphi_{km}, \bm{\psi}^{\uparrow})$, and $\widetilde{\varphi}_{kj}$ in position $h^{-1}(\varphi_{kj}, \bm{\psi}^{\uparrow})$ for the remaining elements. 
    From Remark~\ref{rem:invariance}, it follows that $\widetilde{\bm{\psi}}^{\uparrow}=\bm{\psi}^{\uparrow}$. 
\end{remark}
It is straigthforward to show that the transformation for the continuous variables has unit Jacobian.
Moreover, we propose the lemma below that states that the support in~\eqref{eq:support} is preserved by the transformation $T$. 
\begin{lemma}[Invariance of support] \label{lem:equivalence_indicators}
Under the transformation $T$ described above,
\begin{equation*}
\mathbb{I}\,\!\bigl\{T^{-1}(\widetilde{\bm{x}},\widetilde{\bm{\phi}},\widetilde{m}) \in \mathcal{F}\bigr\}
=
\mathbb{I}\bigl\{(\widetilde{\bm{x}},\widetilde{\bm{\phi}},\widetilde{m})\in \mathcal{F}\bigr\}.
\end{equation*}
\end{lemma}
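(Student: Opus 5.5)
The plan is to verify, one by one, that each constraint defining $\mathcal{F}$ in~\eqref{eq:support} holds for $T^{-1}(\widetilde{\bm{x}},\widetilde{\bm{\phi}},\widetilde{m})=(\bm{x},\bm{\phi},m)$ exactly when the corresponding constraint holds for $(\widetilde{\bm{x}},\widetilde{\bm{\phi}},\widetilde{m})$. By symmetry of $T$ and $T^{-1}$ (they have the same form), it suffices to show one implication, say $(\bm{x},\bm{\phi},m)\in\mathcal{F}\Rightarrow(\widetilde{\bm{x}},\widetilde{\bm{\phi}},\widetilde{m})\in\mathcal{F}$. The converse then follows by applying the same argument with the roles of the two paths exchanged.

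First I will handle the trivial constraints. $\widetilde{\bm{x}},\widetilde{\bm{\nu}}\in\mathbb{R}^n$ and $\widetilde{k}=k>0$ are immediate. $\widetilde{\alpha}=\varphi_{km}\in(\ell_{k-1},r_{k-1}]\subseteq(0,2\pi]$. For the slice constraint, $y\le L(\bm{x})$ holds by assumption, and $m\in\mathcal{A}_k$ gives $L(\widetilde{\bm{x}})\ge y$, so $\widetilde{y}=y\in(0,L(\widetilde{\bm{x}})]$.

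Next I will use the rotation structure of the ellipse. Because $(\widetilde{\bm{x}},\widetilde{\bm{\nu}})$ is the rotation of $(\bm{x},\bm{\nu})$ by $\varphi_{km}-\alpha=\widetilde{\alpha}-\alpha$, a direct trigonometric identity gives, for every $\varphi$,
\[
\widetilde{\bm{x}}\cos(\varphi-\widetilde{\alpha})+\widetilde{\bm{\nu}}\sin(\varphi-\widetilde{\alpha})=\bm{x}\cos(\varphi-\alpha)+\bm{\nu}\sin(\varphi-\alpha).
\]
Hence an angle $\varphi$ denotes the same point on the ellipse in both paths. Since $\widetilde{y}=y$, validity of any angle is the same in both paths. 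This yields $|\widetilde{\mathcal{A}}_i|=|\mathcal{A}_i|=0$ for $i<k$, because $\widetilde{\bm{\varphi}}_i=\bm{\varphi}_i$. At iteration $k$, the set $\widetilde{\bm{\varphi}}_k$ is $\bm{\varphi}_k$ with $\varphi_{km}$ replaced by $\alpha$. The angle $\alpha$ gives the point $\bm{x}$, and $L(\bm{x})\ge y$, so $\widetilde{m}=m\in\widetilde{\mathcal{A}}_k$. The other entries keep their validity status.

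The main obstacle is the interval constraint $\widetilde{\varphi}_{ij}\in(\widetilde{\ell}_{i-1},\widetilde{r}_{i-1}]$. The intervals are built around $\widetilde{\alpha}=\varphi_{km}$ rather than $\alpha$, so I need $\ell_i(\widetilde{\alpha},\bm{\varphi}_{1:i})=\ell_i(\alpha,\bm{\varphi}_{1:i})$ and likewise for $r_i$, for all $i\le k-1$. I will prove this by induction on $i$, following the ESS-style argument.

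The inductive claim is that $\alpha$ and $\varphi_{km}$ both lie in $(\ell_{i},r_{i}]$ for all $i\le k-1$, and that no rejected angle $\varphi_{ij}$ with $i\le k-1$ lies strictly between them on the same side. For $\alpha$ this holds because the shrinking rule always keeps $\alpha$ inside. For $\varphi_{km}$ it holds because it was sampled from $(\ell_{k-1},r_{k-1}]$, which is nested inside every earlier interval. Consequently, each rejected $\varphi_{ij}$ lies below both $\alpha$ and $\varphi_{km}$ or above both. The comparisons $\varphi_{ij}<\alpha$ and $\varphi_{ij}<\widetilde{\alpha}$ then agree, so the max/min updates in Definition~\ref{def:shrinking_rule} coincide. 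One boundary case needs care, because the rule compares with strict $<$ on the left and $\ge$ on the right. An angle exactly equal to $\alpha$ or $\varphi_{km}$ has probability zero, and the half-open convention $(\ell,r]$ handles the edge. I expect this tie-breaking bookkeeping to be the delicate point.

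With equal intervals, $\widetilde{\varphi}_{ij}=\varphi_{ij}\in(\widetilde{\ell}_{i-1},\widetilde{r}_{i-1}]$ for $i<k$. Finally, $\widetilde{\varphi}_{km}=\alpha\in(\ell_{k-1},r_{k-1}]=(\widetilde{\ell}_{k-1},\widetilde{r}_{k-1}]$, which completes the verification.
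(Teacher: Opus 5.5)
Your proposal is correct and follows essentially the paper's route. The angle-sum (rotation) identity makes each angle's validity, and hence the slice and $|\mathcal{A}_i|$ constraints, the same in both paths; the flipping property of Lemma~\ref{lem:recursive_mess}, which you reprove directly from $\alpha,\varphi_{km}\in(\ell_i,r_i]$, transfers the interval constraints; and you obtain the converse from $T$ being an involution rather than the paper's ``similar reasoning''. Your tie-breaking worry needs no probability-zero argument (which would only give an almost-everywhere identity anyway): for $i\le k-1$, a rejected $\varphi_{ij}<\alpha$ satisfies $\varphi_{ij}\le\ell_i<\varphi_{km}$, and a rejected $\varphi_{ij}\ge\alpha$ satisfies $\varphi_{ij}\ge r_i\ge\varphi_{km}$, so the strict and non-strict comparisons agree exactly, as the pointwise statement requires.
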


The proof of Lemma~\ref{lem:equivalence_indicators}, (included in Appendix~\ref{ap:proof_indicators}), uses a second lemma (stated below and proved in Appendix~\ref{ap:proof_recursive_mess}) that formalizes in a mathematically precise manner some important properties of the angle brackets that appear as a consequence of the MESS shrinking rule.
\begin{lemma}[Ordering and flipping properties of intervals] \label{lem:recursive_mess}
    Let $\alpha \in (0, 2\pi]$ and assume the shrinking rule in Definition~\ref{def:shrinking_rule}. Assume further the angle membership property
    \begin{align}\label{eq:membership_mess}
        \varphi_{ij}\in(\ell_{i-1}(\alpha, \bm{\varphi}_{1:i-1}),r_{i-1}(\alpha, \bm{\varphi}_{1:i-1})]
    \end{align}
    for $i=1,\ldots,k$ and $j=1, \ldots, M$. 
    Then, the ordering properties
    \begin{align}\label{eq:lordering_mess}
        \ell_{i-n}(\alpha, \bm{\varphi}_{1:i-n}) \leq \ell_{i}(\alpha, \bm{\varphi}_{1:i})< \alpha,\\
        \alpha \leq r_{i}(\alpha, \bm{\varphi}_{1:i}) \leq r_{i-n}(\alpha, \bm{\varphi}_{1:i-n})\label{eq:rordering_mess}
    \end{align}
    hold for $i=1, \ldots, k-1$ and $n=1, \ldots, i$.
    Moreover, the inequalities
    \begin{equation} \label{eq:intermediate_result_mess}
        \varphi_{ij} < \alpha \iff \varphi_{ij} < \varphi_{km} \;
        \text{and} \; 
        \varphi_{ij} \ge \alpha \iff \varphi_{ij} \ge \varphi_{km}
    \end{equation} 
    hold for $i=1, \ldots, k-1$ and $j, m=1, \ldots, M$, and consequently the flipping properties
    \begin{equation} \label{eq:equivalence_limits_mess}
        \ell_{i}(\alpha, \bm{\varphi}_{1:i}) = \ell_{i}(\varphi_{km}, \bm{\varphi}_{1:i}) 
    \; \text{and} \;
    r_{i}(\alpha, \bm{\varphi}_{1:i}) = r_{i}(\varphi_{km}, \bm{\varphi}_{1:i})
    \end{equation}
    hold for $i = 0, \ldots, k-1$.
\end{lemma}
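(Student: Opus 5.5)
The plan is to prove the three assertions in order, each feeding the next. Throughout, write $\ell_i$ and $r_i$ for $\ell_i(\alpha,\bm{\varphi}_{1:i})$ and $r_i(\alpha,\bm{\varphi}_{1:i})$.

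\textbf{Ordering properties.} I would argue by induction on $i$, starting from $\ell_0=0<\alpha\le 2\pi=r_0$, which holds because $\alpha\in(0,2\pi]$.

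For the monotonicity part, note that by \eqref{eq:lUpdate_sh} the new left bound $\ell_i$ is a maximum over a set containing $\ell_{i-1}$. Hence $\ell_{i-1}\le\ell_i$, and chaining these one-step inequalities gives $\ell_{i-n}\le\ell_i$ for all $n=1,\ldots,i$.

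For the strict bound $\ell_i<\alpha$, every other element of the set in \eqref{eq:lUpdate_sh} is some $\varphi_{ij}$ with $\varphi_{ij}<\alpha$. So the maximum is either $\ell_{i-1}<\alpha$ (induction hypothesis) or one of these angles, and in both cases $\ell_i<\alpha$. The right bound is handled symmetrically using \eqref{eq:rUpdate_sh}: elements satisfy $\varphi_{ij}\ge\alpha$, giving $\alpha\le r_i\le r_{i-1}$. These arguments need only the shrinking rule, so they hold for every $i\le k-1$; the restriction to $i\le k-1$ just reflects that the rule is only applied at rejection steps.

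\textbf{Key inequalities \eqref{eq:intermediate_result_mess}.} Fix $i\le k-1$, an index $j$, and any $m$.

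\emph{Locating $\varphi_{km}$.} By membership \eqref{eq:membership_mess}, $\varphi_{km}\in(\ell_{k-1},r_{k-1}]$.

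\emph{Case $\varphi_{ij}<\alpha$.} The definition of $\ell_i$ as a maximum gives $\varphi_{ij}\le\ell_i$. Since $i\le k-1$, the ordering property gives $\ell_i\le\ell_{k-1}$. Combining,
\[
\varphi_{ij}\le\ell_i\le\ell_{k-1}<\varphi_{km}.
\]

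\emph{Case $\varphi_{ij}\ge\alpha$.} Symmetrically, $\varphi_{ij}\ge r_i\ge r_{k-1}\ge\varphi_{km}$.

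The two cases are exhaustive and their conclusions are mutually exclusive. Therefore both one-directional implications upgrade to the equivalences in \eqref{eq:intermediate_result_mess}. This step is where I expect the main care to be needed. The point is that the ordering property must be applied between index $i$ and index $k-1$, so that the accepted angle, sampled inside the final bracket, lies strictly on the same side of every earlier rejected angle as $\alpha$ does. One must also keep track of the half-open convention: $\ell$ is excluded and $r$ is included, and this is what makes the strict/non-strict inequalities match exactly.

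\textbf{Flipping properties \eqref{eq:equivalence_limits_mess}.} I would prove these by induction on $i$.
\begin{itemize}
\item For $i=0$, both sides equal $0$ and $2\pi$ respectively, since the initial bounds do not depend on the first argument.
\item For the inductive step, the update \eqref{eq:update_sh} with first argument $\varphi_{km}$ uses the previous bound, equal to $\ell_{i-1}$ by hypothesis. It also uses the selection sets $\{\varphi_{ij}:\varphi_{ij}<\varphi_{km}\}$ and $\{\varphi_{ij}:\varphi_{ij}\ge\varphi_{km}\}$. By \eqref{eq:intermediate_result_mess}, these coincide with the sets selected using $\alpha$, for all $i\le k-1$. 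Hence the maxima and minima agree.
\end{itemize}
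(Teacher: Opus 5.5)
Your proof is correct. For the ordering properties it follows the paper's own argument: induction from $\ell_0=0<\alpha\le 2\pi=r_0$, using that $\ell_i$ is a maximum over $\ell_{i-1}$ together with angles below $\alpha$ (symmetrically for $r_i$), and then chaining one-step inequalities to reach general $n$. You are also right that membership is not needed for this part. The paper invokes it only because it writes the maximum without the $\ell_{i-1}$ term.

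The paper's appendix stops after the ordering properties and never proves \eqref{eq:intermediate_result_mess} or \eqref{eq:equivalence_limits_mess}. Your proposal fills exactly that gap:
\begin{itemize}
\item the chain $\varphi_{ij}\le\ell_i\le\ell_{k-1}<\varphi_{km}$ and its mirror $\varphi_{ij}\ge r_i\ge r_{k-1}\ge\varphi_{km}$, which use membership at iteration $k$, the one place the hypothesis is genuinely needed;
\item the exhaustive/exclusive case argument that upgrades these implications to equivalences;
\item the induction on $i$ for the flipping properties.
\end{itemize}
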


We now have all elements needed to identify the joint densities of the forward and backward paths. Namely, the joint forward density becomes
\begin{equation} \label{eq:joint_before}
\begin{aligned}
p_{\bm{x},\bm{\phi}, m}(\bm{x},\bm{\phi}, m)
&= g_{\bm{x},\bm{\phi}}(\bm{x},\bm{\phi})
   P_{h(\alpha; \bm{\psi}), h(\varphi_{km}; \bm{\psi})}(\bm{\psi}^{\uparrow})  \\
&\quad \times \mathbb{I}\bigl((\bm{x}, \bm{\phi}, m)\in \mathcal{F}\bigr),
\end{aligned}
\end{equation}
where
\begin{equation} \label{eq:partial_joint_before}
\begin{split}
    g_{\bm{x},\bm{\phi}}(\bm{x},\bm{\phi}) &\propto \frac{\mathcal{N}(\bm{x}; \bm{\mu}, \bm{\Sigma})\;\mathcal{N}(\bm{\nu}; \bm{\mu}, \bm{\Sigma})}{2\pi}  \\
&\times \prod_{i=1}^{k} \prod_{j=1}^{M} 
   \frac{1}{r_{i-1}(\alpha,\bm{\varphi}_{1:i-1})-\ell_{i-1}(\alpha,\bm{\varphi}_{1:i-1})}.
\end{split}
\raisetag{46pt}
\end{equation}
Now, recalling that the transformation in~\eqref{eq:transformation} has unit Jacobian, the joint reverse density is given from the transformation formula as
\begin{equation}\label{eq:transformation_formula}
    p_{\widetilde{\bm{x}}, \widetilde{\bm{\phi}}, \widetilde{m}}(\widetilde{\bm{x}}, \widetilde{\bm{\phi}}, \widetilde{m}) = p_{\bm{x}, \bm{\phi}, m}(T^{-1}(\widetilde{\bm{x}}, \widetilde{\bm{\phi}}, \widetilde{m})).
\end{equation}
Next, as detailed in Appendix~\ref{ap:proof_mess}, we use   Lemma~\ref{lem:equivalence_indicators}, Remark~\ref{rem:invariance}, and Lemma~\ref{lem:recursive_mess}, to rewrite the RHS of~\eqref{eq:transformation_formula} as
\begin{equation} \label{eq:joint_after}
\begin{split}
p_{\bm{x}, \bm{\phi}, m}(T^{-1}(\widetilde{\bm{x}}, \widetilde{\bm{\phi}}, \widetilde{m})) &= 
    p_{\bm{x},\bm{\phi}}(\widetilde{\bm{x}},\widetilde{\bm{\phi}}) P_{h(\widetilde{\varphi}_{\widetilde{k}\widetilde{m}}; \widetilde{\bm{\psi}}), h(\widetilde{\alpha}; \widetilde{\bm{\psi}})}(\widetilde{\bm{\psi}}^{\uparrow}) \\
    &\times \mathbb{I}((\widetilde{\bm{x}},\widetilde{\bm{\phi}} ,\widetilde{m})\in \mathcal{F}).
\end{split}
\end{equation} 

Summing over $m$ in~\eqref{eq:joint_before} and over $\widetilde{m}$ in~\eqref{eq:joint_after} we obtain from the double-stochastic property of the transition matrix that $(\bm{x}, \bm{\phi})$ and $(\widetilde{\bm{x}}, \widetilde{\bm{\phi}})$ are identically distributed.
As the marginal distribution for $\bm{x}$ by construction is the target distribution $\pi(\bm{x})$, the updated state $\widetilde{\bm{x}}$ must also be distributed according to the target distribution. Thus, the target distribution is an invariant distribution for the Markov chain defined by MESS. We employ the same irreducibility and aperiodicity arguments given by \citet{Murray2010} for ESS to claim that the chain converges to the invariant distribution. 

\section{Transition Matrix}\label{sec:transition}
A simple choice for choosing the elements of the transition matrix $\bm{P}(\bm{\psi}^{\uparrow})$ in Definition~\ref{def:transition_matrix} is simply to set all elements, except those in the diagonal, equal to $1/(B-1)$. This is equivalent to accepting uniformly among the $B-1$ valid sampled angles. 
A more sophisticated alternative, used e.g. in \citet{tjelmeland2004using}, is to specity a model for the probabilities in the transition matrix through a set of constraints that the elements in the matrix have to fulfill, and then find values for the elements subject to these constraints.

Following this line of thought, we ask from the transition matrix that: 

1) it is a valid transition matrix for the Markov chain, 
\begin{equation} \label{eq:valid_transition_matrix}
\begin{split}
    0 \leq P_{rs}(\bm{\psi}^{\uparrow}) \leq 1 \hspace{0.5em} \text{for } r, s = 0, \ldots, B-1\\
    \sum_{s=0}^{B-1} P_{rs}(\bm{\psi}^{\uparrow}) = 1 \hspace{0.5em} \text{for } r = 0, \ldots, B-1,
\end{split}   
\end{equation}
2) it is double stochastic, 
\begin{equation} \label{eq:double_stochastic}
\begin{split}
    \sum_{r=0}^{B-1} P_{rs}(\bm{\psi}^{\uparrow}) = 1 \hspace{0.5em} \text{for } s = 0, \ldots, B-1,
\end{split}   
\end{equation}
and that 3) self-transitions are not allowed,
\begin{equation} \label{eq:no_self_transitions}
\begin{split}
    P_{rr}(\bm{\psi}^{\uparrow})=0 \hspace{0.5em} \text{for } r = 0, \ldots, B-1.
\end{split}   
\end{equation}

To optionally reward larger moves of the MESS Markov chain, we define a distance function $d(r, s) \geq 0$ that somehow quantifies the difference between a pair of angles $(\psi_{(r)}, \psi_{(s)})$, for $0 \leq r,s \leq B-1$. 
The entries of the transition matrix can then thus defined as the solution to an optimization problem that involves maximizing the objective function 
\begin{equation} \label{eq:objective_function}
    \sum_{r=0}^{B-1}\sum_{s=0}^{B-1} d(r,s) P_{rs}(\bm{\psi}^{\uparrow})
\end{equation}
subject to the constraints in~\eqref{eq:valid_transition_matrix}, \eqref{eq:double_stochastic}, and~\eqref{eq:no_self_transitions}.
The optimization problem in~\eqref{eq:objective_function} is linear in the unknown entries of the matrix and therefore can be solved with linear programming (LP) using standard solution algorithms. See e.g. \citet{BoydVandenberghe2004} for an introduction to linear programming.

With respect to the choice of distance function, in this article we explore the great-circle angular distance between two angles $\psi_{(r)}$ and $\psi_{(s)}$,
\begin{equation} \label{eq:angular_dist}
    d(r, s) = \min(|\psi_{(r)} - \psi_{(s)}|, 2\pi - |\psi_{(r)} - \psi_{(s)}|),
\end{equation}
and the Euclidean distance in state space, although other choices are allowed. 

\section{Computational and statistical scaling} \label{sec:computational}
Each MESS step consists of $k$ subiterations with $M$ likelihood evaluations each, plus ordering $\bm{\psi}$ and solving the LP problem. Thus MESS has a cost 
\begin{equation}
    \mathcal{O}(kM\cdot\mbox{cost of lik} + \mbox{cost LP})
\end{equation} per iteration. 
Parallelizing the proposal step and likelihood evaluations to $Z$ independent processors reduces the complexity to $\mathcal{O}(kM\cdot\mbox{cost of lik}/Z + \mbox{cost LP})$, without considering the overhead. The LP problem has cost $\mathcal{O}(B^2)$ when all entries of the transition matrix must be computed, and thus might dominate the total cost if $B$ is large and the likelihood evaluation is cheap. We leave our ideas to reduce the cost of the optimization problem for a future article.

In a body of recent works, e.g. \citet{BeskosPinskiSanzSernaStuart2011, Cotter2013, GlattHoltz2024}, a family of new algorithms adapted to sampling~\eqref{eq:target} in a dimension-free fashion was proposed. 
The main idea is to take proposal structures that respect the form of the prior measures in a way that the method is defined directly on Hilbert spaces.
Because MESS (and therefore ESS) have a pCN-type proposal structure, which holds the prior as invariant, we assert here that MESS falls under this class of dimension-free algorithms.

In forthcoming work we will aim to justify this claim on a rigorous foundation drawing upon the weak Harris theorem, as in
\citet{HairerStuartVollmer2014} and our more recent work in \citet{Carigi2026} (to appear). 
As it stands, we have compelling numerical evidence for this dimension-free behaviour in MESS. See particularly the results for the solute transport toy model featuring a non-Gaussian, non-parametric posterior in Section~\ref{sec:results} below.

\section{Related parallel-ESS work}\label{sec:related}
In the ESS literature, the Generalized Elliptical Slice Sampling (GESS; \citet{nishihara2014parallel}) and the Transport Elliptical Slice Sampling (TESS; \citet{cabezas2023transport} aim to improve the correspondence of the prior with the posterior, a feat achieved by proposing a scale-mixture of Gaussians approximation to the posterior in the first, and via normalizing flows in the second. The resulting satisfactory mixing is further improved in both algorithms by the use of parallel computing to run two communicating groups of parallel chains.

To the best of our knowledge, parallelization has not been used in the ESS literature to parallelize the proposal mechanism and likelihood evaluation. However, in Elliptical Slice Sampling with Expectation Propagation (EPESS; \citet{fagan2016elliptical}) the authors introduce \textit{Recycled ESS}, where each ESS cycle generates sequentially $M$ valid proposals instead of just one. This sequential generation causes the angular distance between successive proposals to decrease in average, probably explaining the small improvement in computational and statistical efficiency observed by the authors. 

\section{Results}\label{sec:results}
We first compare MESS to ESS for the Gaussian process classification example studied in \citet{Murray2010}, and then demonstrate MESS variants on two Bayesian inverse problems. The code and data for all the examples is available online at \href{https://github.com/guillerminasenn/mess}{https://github.com/guillerminasenn/mess}.

\subsection{Gaussian process classification}
See \citet{Murray2010} for a description of the problem. 
Figure~\ref{fig:shrinking_steps} shows how, for this problem, increasing the number of proposals geometrically reduces the mean number of shrinking steps per iteration. The mean number of likelihood iterations, estimated as the product $M \times \text{mean nr. of shrinking steps}$, increases linearly with $M$, and so does the computational cost per iteration without considering parallelization. If parallelization with $Z$ cores is used, the computational time per iteration is proportional to $M/Z \times \text{mean nr. of shrinking steps}$.
\begin{figure}
    \centering
    \includegraphics[width=0.9\linewidth, trim={0cm 0 0 0cm, clip}]{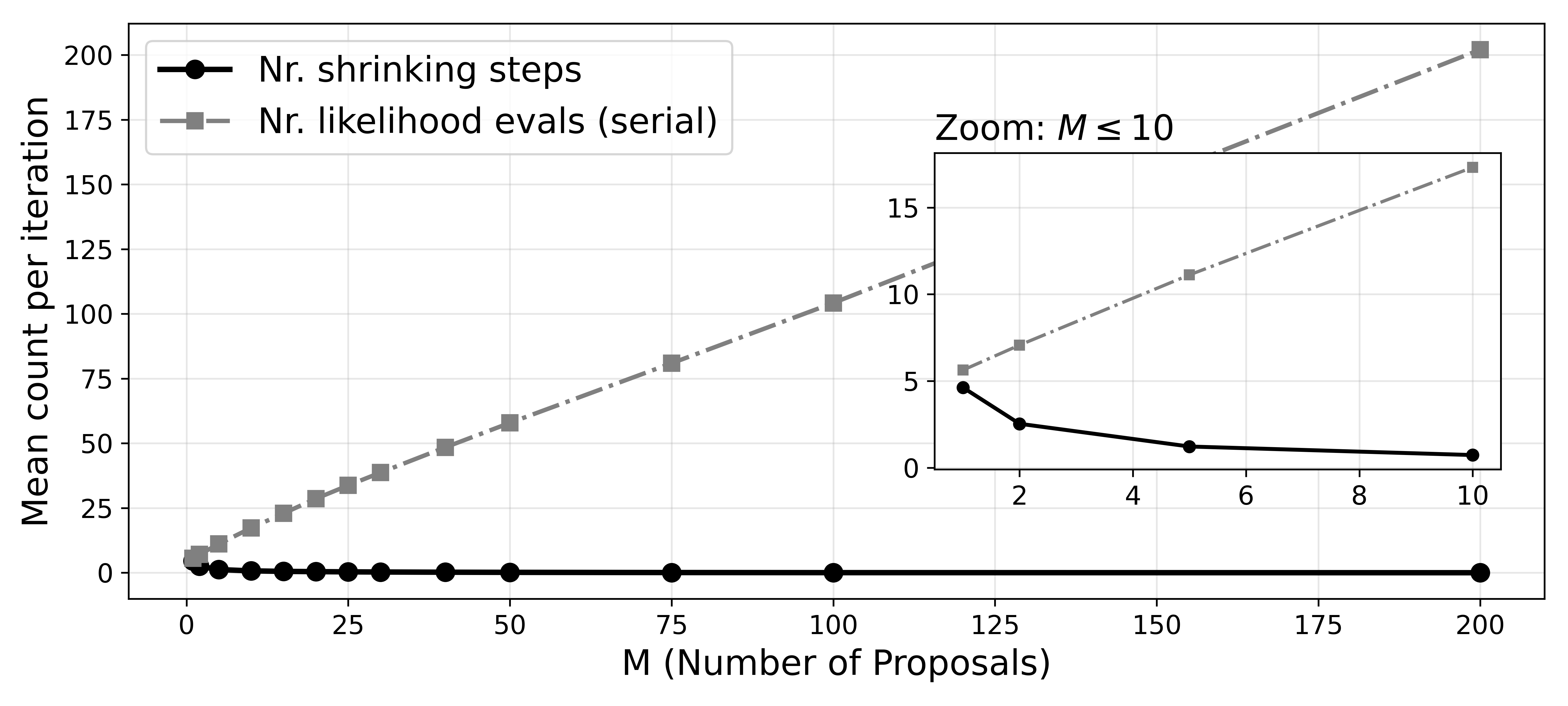}
    \caption{Average number of shrinking steps and likelihood evaluations per iteration, as a function of the number of proposals, for the Gaussian process classification example.}
    \label{fig:shrinking_steps}
\end{figure}

\subsection{Blind deconvolution}\label{sec:bd}
Blind deconvolution (BD) is an ill-posed inverse problem where the goal is to simultaneously recover the true signal and the unknown convolution or blur kernel from a set of observations corrupted by blur and noise. The one-dimensional BD model is mathematically formulated as
\begin{equation} \label{eq:bd}
    \bm{d} = \bm{w} \star \bm{c} + \bm{e}, 
\end{equation}
where $\bm{d},\bm{c},\bm{e}\in\mathbb{R}^n$ are the vectorized observations, signal and noise, respectively, and $\bm{w}\in\mathbb{R}^k$ is the 1D blur kernel. \citet{BulandOmre2003b,Senn2025,Senn2026} approach the simultaneous estimation of $\bm{w}$ and $\bm{c}$ from~\eqref{eq:bd} as a Bayesian inverse problem with Gaussian likelihood and Gaussian (independent) signal and blur priors. The intractable posterior exhibits perfect sign-shift multimodality when both priors have zero mean, due to $\bm{w}\star\bm{c}=-\bm{w}\star(-\bm{c})$. 
This posterior structure is especially amenable to MESS since the proposal always considers the negative of the current state.

We simulate data from~\eqref{eq:bd} using $n=24$ and $k=12$, and constrain the signal prior with 4 exact signal observations, thus shifting the prior mean away from zero but not enough to destroy all posterior multimodality. 
We then compare the behaviour of MESS (with $M=20$ and angular distance) with the benchmark Gibbs sampler and HMC algorithms from \citet{Senn2025, Senn2026}, for the marginal estimation of $\bm{w}$. As evidenced by the traceplots and posterior samples in Figure~\ref{fig:bd},  HMC and MESS find both sign-shift modes in the first 50k iterations, while Gibbs finds only one. The gradient-free MESS needed 6min for these iterations, compared to the 2.5hrs required by HMC. 

\begin{figure}
    \centering
    \includegraphics[width=1\linewidth, trim={2cm 1cm 2cm 0.9cm}, clip]{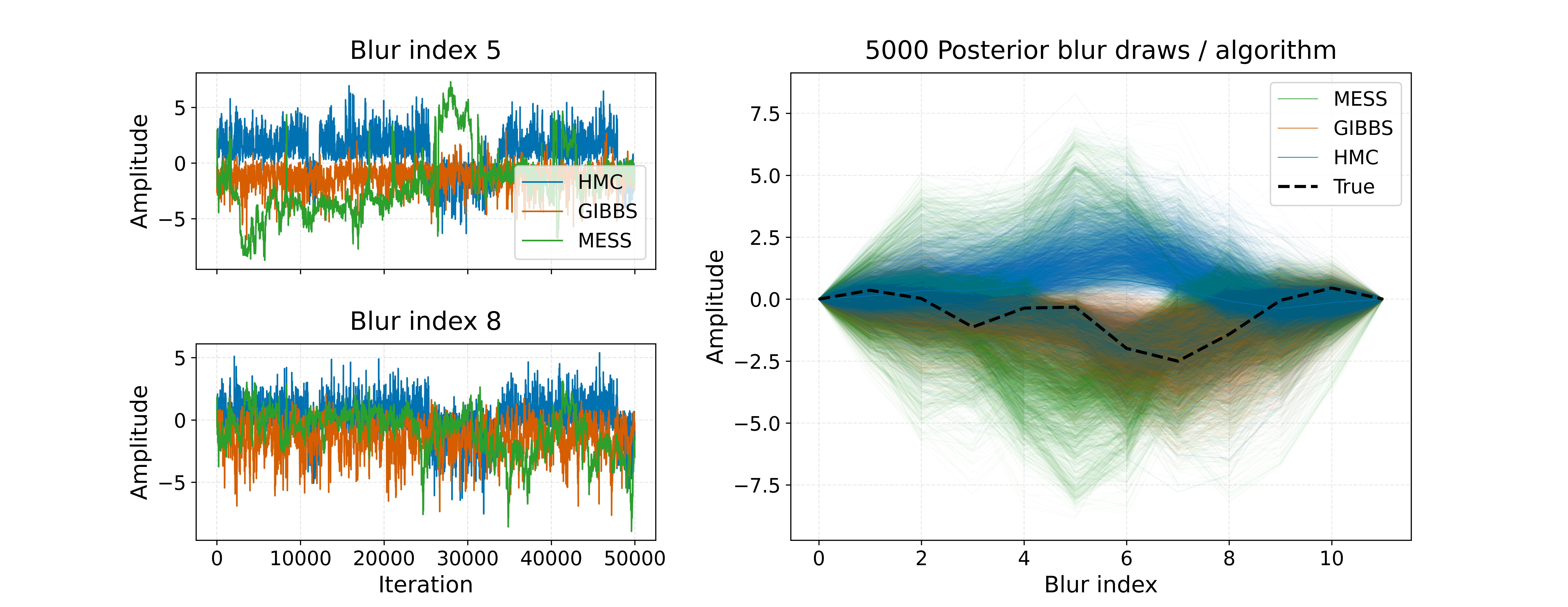}
    \caption{Posterior samples and selected traceplots for the marginal blur estimation in the blind deconvolution model. MESS ($M=20$, angular) explores both modes in 4\% of the time used by HMC.}
    \label{fig:bd}
\end{figure}

\subsection{Toy model for solute transport}\label{sec:ad}
We consider the following toy model \citep{GlattHoltz2024} which mimics some of the features of the transportation of a dye (or any other solute) by a fluid under damping and external forcing: 
\begin{equation}\label{eq:ad_toy}
    (\bm{A} + \kappa \bm{I}) \bm{\Theta} = \bm{g}.
\end{equation}
Here, $\bm{\Theta} \in \mathbb{R}^d$ is the state variable (e.g. dye concentration), $\kappa>0$ is a damping parameter, $\bm{g} \in \mathbb{R}^d$ is a time-independent forcing term (e.g. stirring action), and $\bm{A}$ is a zero-diagonal matrix of order $d$ that plays the role of a discretized bounded operator. The model is already written in a basis that behaves like Fourier modes, so that the elements $a_{ij}$ represents the energy transfer from mode $i$ to $j$. To model a natural physical symmetry, $\bm{A}$ is defined to be antisymmetric and thus specified by the the $m = d(d-1)/2$ non-zero elements in its upper triangle. 
The model in~\eqref{eq:ad_toy} extends to infinite dimensions assuming appropriate square-summability of the coefficients of $\bm{A}$ and $\bm{g}$ and hence $\bm{\Theta}$. 

Our goal is to use Bayesian inversion on~\eqref{eq:ad_toy} to estimate $\bm{A}$. For the estimation we assume the partial observational model
\begin{equation}\label{eq:ad_obs_model}
    \bm{y} = \mathcal{P}(\bm{\Theta}(\bm{A})) + \bm{\varepsilon}, \quad \bm{\varepsilon} \sim N_k(0, \sigma^2 \bm{I}_k),
\end{equation}
where $\mathcal{P}:\mathbb{R}^d \to \mathbb{R}^k$ is a projection operation and $\sigma^2$ is the observational noise scale. The estimation problem can be phrased for finite $d$ as sampling from a posterior with the form in ~\eqref{eq:target} 
with $L(\bm{A}) := \exp{(-\Phi(\bm{A}))}$, $\Phi(\bm{A})$ being a quadratic potential, and prior $N(\bm{0}, \bm{C})$. The covariance matrix $\bm{C}$ is chosen to enforce a local advection operator. See Appendix~\ref{ap:toy_ad} for a more detailed problem specification. 

\begin{figure}
    \centering
    \includegraphics[width=0.85 \linewidth, trim={0.5cm 0 0 0}, clip]{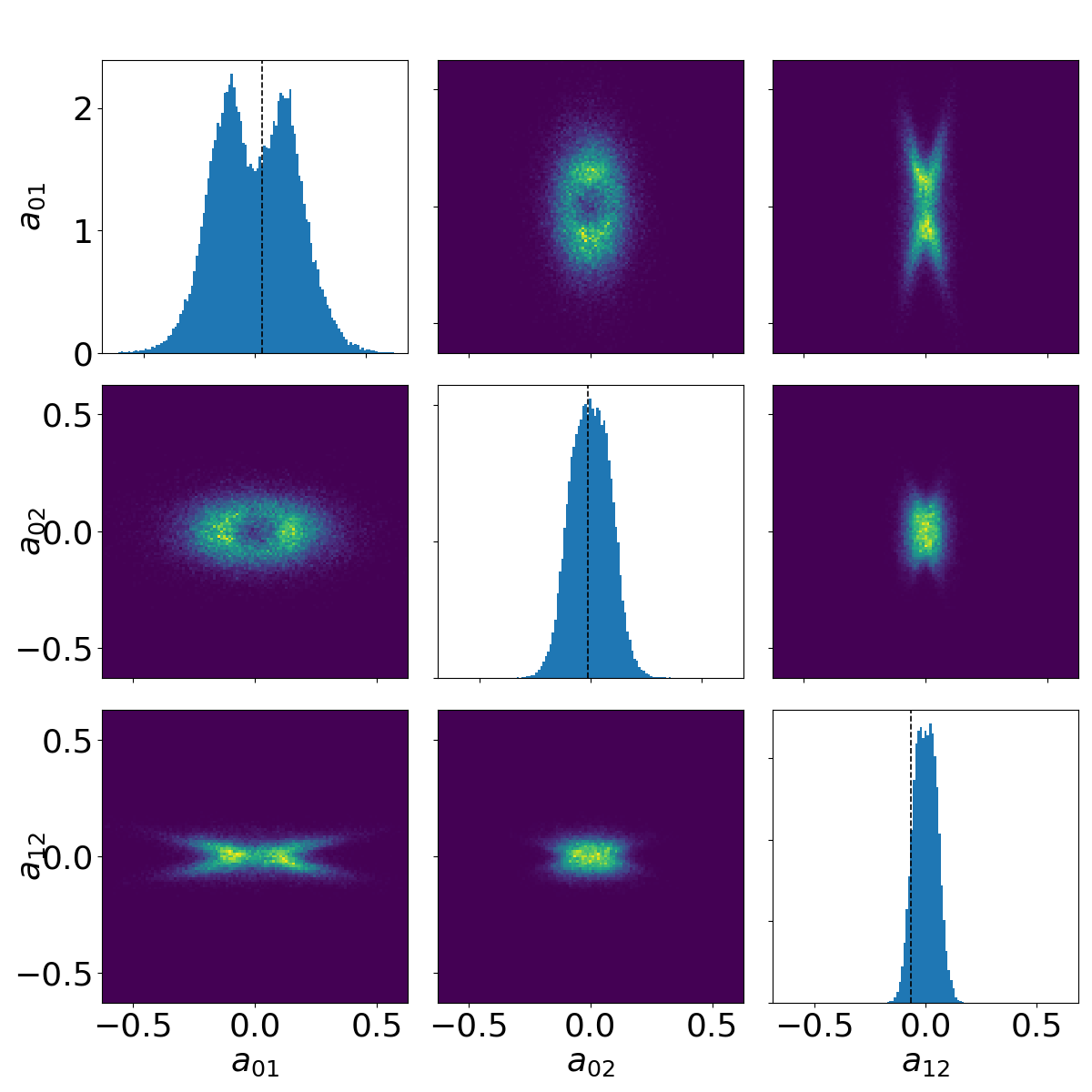}
    \caption{Posterior marginal histograms for $a_{01}, a_{02}$, and $a_{12}$ on the diagonal, and corresponding pairwise density plots off-diagonal, for the solute transport model at $d=10$, computed with 300k samples from MESS $(M=100)$. The highly non-Gaussian marginal posteriors and complex correlation structures are a result of the non-linear forward map.}
    \label{fig:ad_pairplots_n3}
\end{figure}

\begin{figure}
    \centering
    \includegraphics[width=0.9\linewidth, trim={0cm 0 0 0, clip}]{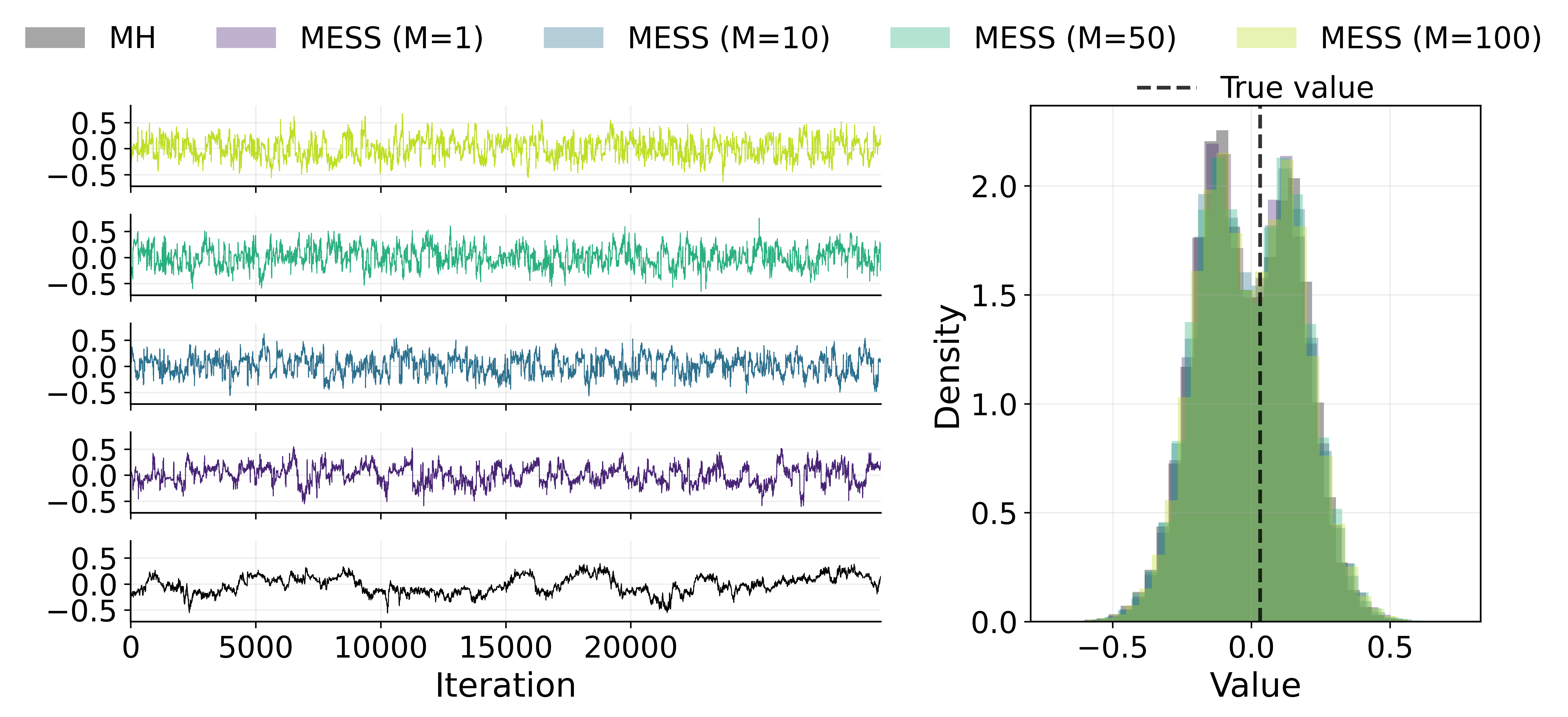}
    \caption{Posterior traceplots (30k iterations) and marginal histogram (300k iterations) for $a_{01}$ in the solute transport model, at $d=10$. All chains satisfactorily explore the posterior although with different exploration behaviour.}
    \label{fig:traceplots_ad}
\end{figure}

We perform a numerical experiment to assess the dimension-robust mixing behavior of MESS. We generate datasets for $d\in\{10, 15, \ldots, 50\}$ as  described in Appendix~\ref{ap:toy_ad}, and for each dataset estimate the corresponding $d \times d$ matrix $\bm{A}_d$ using the same observational scale and single long MESS and Metropolis-Hastings (MH) chains. For MESS, we consider $M\in \{1,10,50,100\}$, and for MH, the proposal is tuned to achieve give 23.4\% acceptance rate at $d=20$. 
We fix the observational scale and the MH proposal standard deviation across dimensions to prevent that the mixing of a chain is compensated for, for increased dimension, by increased information or better tuning.

Figure~\ref{fig:ad_pairplots_n3} illustrates the complex posterior geometry obtained at $d=10$ and using samples from MESS $(M=100)$. The visualization includes marginal histograms for $a_{01}, a_{02}$, and $a_{12}$ on the diagonal, and corresponding pairwise density plots off-diagonal. The highly non-Gaussian marginal posteriors and complex correlation structures are a result of the non-linear map in~\eqref{eq:ad_toy}. 

Figure~\ref{fig:traceplots_ad} (right) shows how all MESS variantes and MH explore the marginal for $a_{01}$ satisfactorily at $d=10$ for the 300k considered iterations, although with different exploration behaviour, as illustrated by the first 10k iterations plotted in Figure~\ref{fig:traceplots_ad} (left). 

\begin{figure}
    \centering
    \includegraphics[width=1\linewidth, trim={0 0.2cm 0 0cm}, clip]{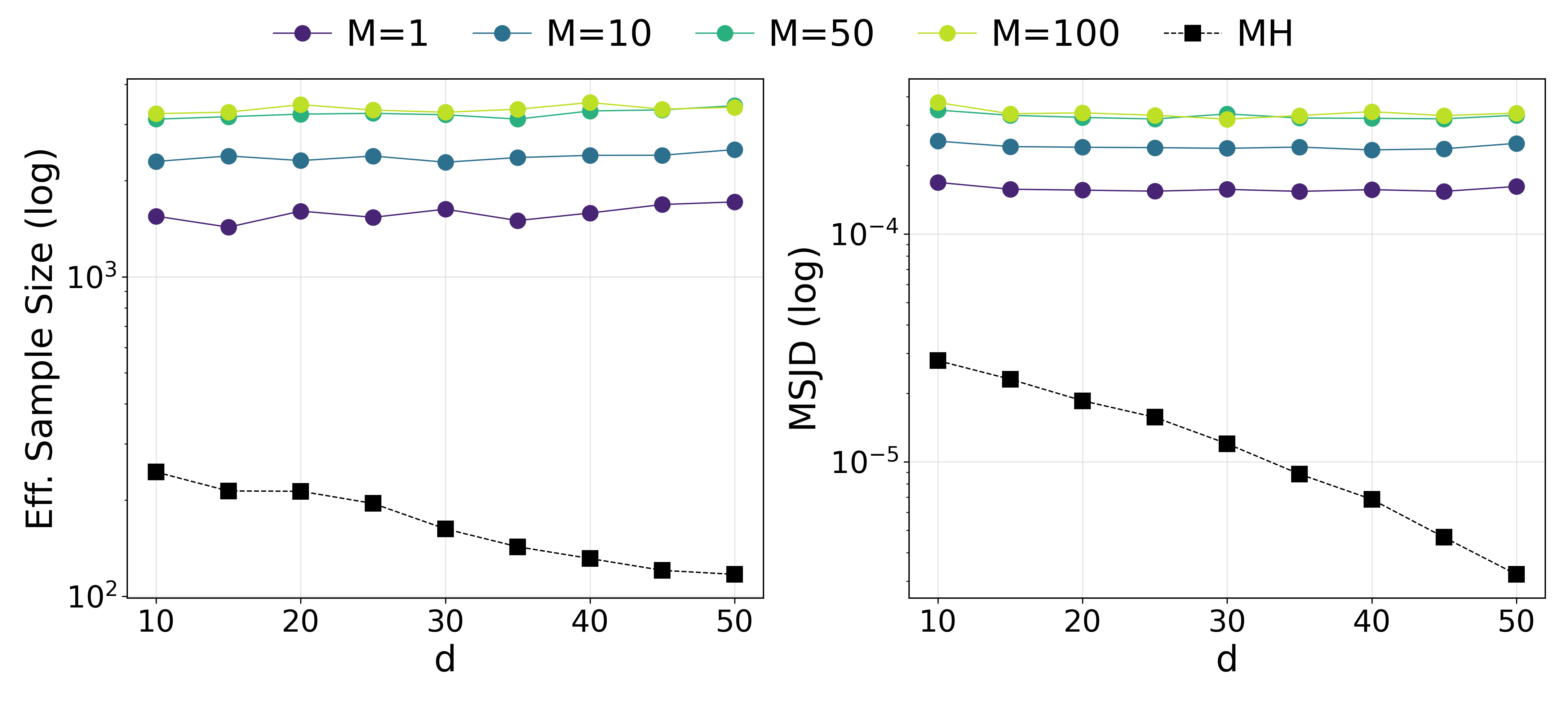}
    \caption{Effective sample size and MSJD in log-scale as a function of dimension, for several values of $M$ and the benchmark Metropolis-Hastings, for the solute transport model. The mixing rate of MESS does not degrade with dimension.}
    \label{fig:mixing_ad}
\end{figure}

Figure~\ref{fig:mixing_ad} gives empirical indication that the mixing rate (as measured by the effective sample size and the MSJD) does not degrade with dimension for MESS, as it happens with the benchmark MH. The same figure shows that the mixing improves with $M$ in MESS, up to a point where the number of proposals saturates the ellipse. 

Table~\ref{tab:ess_msjd_mean_d10} compares mixing performance at $d=10$ using uniform, angular, and Euclidean MESS chains. Informing the transition matrix in Section~\ref{sec:transition} with Euclidean distance slightly improved the average effective sample size, where for the average we considered the 8 elements $\{a_{01}, a_{02}, a_{03}, a_{04}, a_{12}, a_{13}, a_{23}, a_{24}\}$ of $\bm{A}_{10}$ that exhibited the most non-Gaussian marginal distributions. The ESS for Euclidean distance measured individually per component yielded improvements of up 20\% for some components (see Appendix~\ref{ap:toy_ad}, Table~\ref{tab:ess_msjd_ad_a02}). 
\begin{table}
\centering
\small
\begin{tabular}{l|ccc|ccc}
\toprule
& \multicolumn{3}{c|}{Mean Eff. Sample Size} & \multicolumn{3}{c}{MSJD} \\
$M$ & Unif & Ang & Eucl & Unif & Ang & Eucl \\
\midrule
10 & 2295 & 2241 & \textbf{2334} & 0.00026 & 0.00025 & 0.00026 \\
50 & 3113 & 3061 & \textbf{3149} & 0.00035 & 0.00034 & 0.00035 \\
\bottomrule
\end{tabular}
\caption{Average ESS/MSJD obtained with different MESS variants at $d=10$ in the solute transport model. The values are obtaines with 300k samples and averaged over eight elements of $\bm{A}_{10}$ with non-Gaussian marginals. Euclidean MESS performs slightly better on average than the other variants.}
\label{tab:ess_msjd_mean_d10}
\end{table}

\section{Concluding remarks}\label{sec:conclusion}
We proposed Multiproposal Elliptical Slice Sampling (MESS), a generalization of the Elliptical Slice Sampling algorithm for posterior inference in models with Gaussian priors, where $M$ angles are sampled in parallel and the acceptance step can favor proposals far from the current state. 
Experimental results showed improved mixing driven by a higher probability of reaching slice segments far from the current state when $M>1$, faster bracket shrinking, and the use of a distance-informed transition matrix in the acceptance step.
In the non-parametric numerical example, the mixing rate did not degrade with dimension, suggesting dimension-robust mixing behavior.

Future work involves establishing theoretical results for the dimension-free claim, along with studying the interaction between the number of proposals and the geometries induced by different likelihoods, and reducing the computational cost of the optimization problem used to compute the transition matrix. 

\begin{acknowledgements} % will be removed in pdf for initial submission,
						 % (without ‘accepted’ option in \documentclass)
                         % so you can already fill it to test with the
                         % ‘accepted’ class option
    This work was funded by the GAMES and CGF projects at NTNU, and by NSF DMS-2236854, DMS-2108790, and DMS-1816551.
\end{acknowledgements}

% References
\bibliography{references}

\newpage

\onecolumn

\title{Multiproposal Elliptical Slice Sampling\\(Supplementary Material)}
\maketitle

\appendix
% Reset and prefix equation numbering
\renewcommand{\theequation}{A.\arabic{equation}}
\setcounter{equation}{0}

% Reset and prefix figure numbering
\renewcommand{\thefigure}{A.\arabic{figure}}
\setcounter{figure}{0}

% Reset and prefix table numbering
\renewcommand{\thetable}{A.\arabic{table}}
\setcounter{table}{0}

\section{Proof of Lemma~\ref{lem:equivalence_indicators}}\label{ap:proof_indicators}
\begin{proof}[Proof of Lemma~\ref{lem:equivalence_indicators}]

Using the transformation in~\eqref{eq:transformation_inverse}, the indicator function $\mathbb{I}(T^{-1}(\widetilde{\bm{x}}, \widetilde{\bm{\phi}}, \widetilde{m}) \in \mathcal{F})$
equals one if and only if the following expressions are fulfilled,
\begin{equation}
\label{eq:rest1}
\begin{aligned}
(a)\;\; & \widetilde{\varphi}_{\widetilde{k}\widetilde{m}} \in \bigl(0,\, 2\pi\bigr], \\[2pt]
(b)\;\; & \widetilde{\varphi}_{ij} \in 
    \bigl(
        \ell_{i-1}\bigl(\widetilde{\varphi}_{\widetilde{k}\widetilde{m}},\, \widetilde{\bm{\varphi}}_{1:i-1}\bigr),\;
        r_{i-1}\bigl(\widetilde{\varphi}_{\widetilde{k}\widetilde{m}},\, \widetilde{\bm{\varphi}}_{1:i-1}\bigr)
    \bigr]
    \quad \text{for } i=1,\ldots,\widetilde{k}-1,\; j=1,\ldots,M, \\[2pt]
(c)\;\; & \widetilde{\varphi}_{\widetilde{k}j} \in 
    \bigl(
        \ell_{\widetilde{k}-1}\bigl(\widetilde{\varphi}_{\widetilde{k}\widetilde{m}},\, \widetilde{\bm{\varphi}}_{1:\widetilde{k}-1}\bigr),\;
        r_{\widetilde{k}-1}\bigl(\widetilde{\varphi}_{\widetilde{k}\widetilde{m}},\, \widetilde{\bm{\varphi}}_{1:\widetilde{k}-1}\bigr)
    \bigr]
    \quad \text{for } j \neq \widetilde{m}, \\[2pt]
(d)\;\; & \widetilde{\alpha} \in 
    \bigl(
        \ell_{\widetilde{k}-1}\bigl(\widetilde{\varphi}_{\widetilde{k}\widetilde{m}},\, \widetilde{\bm{\varphi}}_{1:\widetilde{k}-1}\bigr),\;
        r_{\widetilde{k}-1}\bigl(\widetilde{\varphi}_{\widetilde{k}\widetilde{m}},\, \widetilde{\bm{\varphi}}_{1:\widetilde{k}-1}\bigr)
    \bigr], \\[2pt]
(e)\;\; & \widetilde{y} \in 
    \bigl(
        0,\;
        L\bigl(
            \widetilde{\bm{x}} \cos(\widetilde{\varphi}_{\widetilde{k}\widetilde{m}} - \widetilde{\alpha})
            + \widetilde{\bm{\nu}} \sin(\widetilde{\varphi}_{\widetilde{k}\widetilde{m}} - \widetilde{\alpha})
        \bigr)
    \bigr], \\[2pt]
(f)\;\; & 
|\mathcal{A}_i\bigl(
    \widetilde{\bm{x}} \cos(\widetilde{\varphi}_{\widetilde{k}\widetilde{m}} - \widetilde{\alpha})
    + \widetilde{\bm{\nu}} \sin(\widetilde{\varphi}_{\widetilde{k}\widetilde{m}} - \widetilde{\alpha}), \\
& \hspace{2.5em}
    -\widetilde{\bm{x}} \sin(\widetilde{\varphi}_{\widetilde{k}\widetilde{m}} - \widetilde{\alpha})
    + \widetilde{\bm{\nu}} \cos(\widetilde{\varphi}_{\widetilde{k}\widetilde{m}} - \widetilde{\alpha}), \\
& \hspace{2.5em}
    \widetilde{y},\; \widetilde{\bm{\varphi}}_i,\; \widetilde{\varphi}_{\widetilde{k}\widetilde{m}}
\bigr)| = 0
\quad \text{for } i = 1, \dots, \widetilde{k}-1 \\[2pt]
(g)\;\; & 
|\mathcal{A}_{\widetilde{k}}\bigl(
    \widetilde{\bm{x}} \cos(\widetilde{\varphi}_{\widetilde{k}\widetilde{m}} - \widetilde{\alpha})
    + \widetilde{\bm{\nu}} \sin(\widetilde{\varphi}_{\widetilde{k}\widetilde{m}} - \widetilde{\alpha}), \\
& \hspace{2.5em}
    -\widetilde{\bm{x}} \sin(\widetilde{\varphi}_{\widetilde{k}\widetilde{m}} - \widetilde{\alpha})
    + \widetilde{\bm{\nu}} \cos(\widetilde{\varphi}_{\widetilde{k}\widetilde{m}} - \widetilde{\alpha}), \\
& \hspace{2.5em}
    \widetilde{y},\; (\widetilde{\bm{\varphi}}_{\widetilde{k}_{-\widetilde{m}}},\, \widetilde{\alpha}),\; \widetilde{\varphi}_{\widetilde{k}\widetilde{m}}
\bigr)| >0 \\[2pt]
(h)\;\; & \widetilde{m} \in\;  
\mathcal{A}_{\widetilde{k}}\bigl(
    \widetilde{\bm{x}} \cos(\widetilde{\varphi}_{\widetilde{k}\widetilde{m}} - \widetilde{\alpha})
    + \widetilde{\bm{\nu}} \sin(\widetilde{\varphi}_{\widetilde{k}\widetilde{m}} - \widetilde{\alpha}), \\
& \hspace{2.5em}
    -\widetilde{\bm{x}} \sin(\widetilde{\varphi}_{\widetilde{k}\widetilde{m}} - \widetilde{\alpha})
    + \widetilde{\bm{\nu}} \cos(\widetilde{\varphi}_{\widetilde{k}\widetilde{m}} - \widetilde{\alpha}), \\
& \hspace{2.5em}
    \widetilde{y},\; (\widetilde{\bm{\varphi}}_{\widetilde{k}_{-\widetilde{m}}},\, \widetilde{\alpha}),\; \widetilde{\varphi}_{\widetilde{k}\widetilde{m}}
\bigr),
\end{aligned}
\end{equation}
where $(a)-(d)$ are related to angle memberships, $(e)-(g)$ to likelihood levels, and $(h)$ to the chosen proposal index.
In the following, we show that the expressions in~\eqref{eq:rest1} are equivalent to 
\begin{equation}
\label{eq:rest2}
\begin{aligned}
(i)\;& \widetilde{\alpha} \in (0, 2\pi], \\[2pt]
(ii)\;& \widetilde{\varphi}_{ij} \in 
      \bigl(\,\ell_{i-1}(\widetilde{\alpha}, 
      \widetilde{\bm{\varphi}}_{1:i-1}),\,
      r_{i-1}(\widetilde{\alpha}, 
      \widetilde{\bm{\varphi}}_{1:i-1})\,\bigr]
      \quad\text{for } i=1,\ldots,\widetilde{k}, \quad j=1,\ldots,P,\\[2pt]
(iii)\;& \widetilde{y} \in (0, L(\widetilde{\bm{x}})], \\[2pt]
(iv)\;& 
|\mathcal{A}_i\bigl(\widetilde{\bm{x}}, \widetilde{\bm{\nu}}, \widetilde{y}, \widetilde{\bm{\varphi}}_{i}, \widetilde{\alpha}\bigr)|=0 \quad \text{for } i=1, \dots, \widetilde{k}-1,\\[2pt]
(v)\;& 
|\mathcal{A}_{\widetilde{k}}\bigl(\widetilde{\bm{x}}, \widetilde{\bm{\nu}}, \widetilde{y}, \widetilde{\bm{\varphi}}_{\widetilde{k}}, \widetilde{\alpha}\bigr)| > 0,\\[2pt]
(vi)\;& \widetilde{m} \in \mathcal{A}_{\widetilde{k}}(\widetilde{\bm{x}}, \widetilde{\bm{\nu}}, \widetilde{y}, \widetilde{\bm{\varphi}}_{\widetilde{k}}, \widetilde{\alpha}).
\end{aligned}
\end{equation}
That~\eqref{eq:rest2} is true is equivalent to $\mathbb{I}\bigl\{(\widetilde{\bm{x}}, \widetilde{\bm{\phi}}, \widetilde{m}) \in \mathcal{F}\bigr\}=1$.

\begin{proof}[Proof for~\eqref{eq:rest1}$(a)-(d) \iff$\eqref{eq:rest2}$(i-ii)$]
The statements in~\eqref{eq:rest1}$(a)-(d)$ allow us to use Lemma~\ref{lem:recursive_mess} with the transformations $\alpha=\widetilde{\varphi}_{\widetilde{k}\widetilde{m}}$, $\varphi_{km}=\widetilde{\alpha}$, and $\varphi_{ij}=\widetilde{\varphi}_{ij}$ for the remaining angles, and therefore properties in the Lemma~\ref{lem:recursive_mess} apply for the intervals and angles defined by the transformation. 

From the ordering properties in Lemma~\ref{lem:recursive_mess}, 
\begin{equation*}
    0 = \ell_{0}(\widetilde{\varphi}_{\widetilde{k}\widetilde{m}}) < \ell_{\widetilde{k}-1}(\widetilde{\varphi}_{\widetilde{k}\widetilde{m}}, \widetilde{\bm{\varphi}}_{1:\widetilde{k}-1}) \quad \text{and} \quad r_{\widetilde{k}-1}(\widetilde{\varphi}_{\widetilde{k}\widetilde{m}}, \widetilde{\bm{\varphi}}_{1:\widetilde{k}-1}) \leq r_{0}(\widetilde{\varphi}_{\widetilde{k}\widetilde{m}}) \leq 2\pi,
\end{equation*}
and then~\eqref{eq:rest1}$(d)$ implies~\eqref{eq:rest2}$(i)$.

From the flipping properties in Lemma~\ref{lem:recursive_mess}, \eqref{eq:rest1}$(b)$ automatically implies \eqref{eq:rest2}$(ii)$ for $i=1, \ldots, \widetilde{k}-1$, and \eqref{eq:rest1}$(c)$ automatically implies \eqref{eq:rest2}$(ii)$ for $i=\widetilde{k}$ with $j\neq \widetilde{m}$. 
To get~\eqref{eq:rest2}$(ii)$ for $j=\widetilde{m}$, we use the ordering properties to note that
\begin{equation*}
    \ell_{\widetilde{k}-1}(\widetilde{\varphi}_{\widetilde{k}\widetilde{m}}, \widetilde{\bm{\varphi}}_{1:\widetilde{k}-1}) < \widetilde{\varphi}_{\widetilde{k}\widetilde{m}} \leq r_{\widetilde{k}-1}(\widetilde{\varphi}_{\widetilde{k}\widetilde{m}}, \widetilde{\bm{\varphi}}_{1:\widetilde{k}-1}),
\end{equation*}
and then we use the flipping properties again.
The proof in the only if direction is given by a similar reasoning.
\end{proof}

\sloppy \begin{proof}[Proof for (\ref{eq:rest1})$(e)-(h) \iff$ (\ref{eq:rest2})$(iii-vi)$]
We first show the if direction. Expression~\eqref{eq:rest2}$(iv)$ is automatically satisfied by using the formula for the sum of two angles on the definition of the set $\mathcal{A}_i$ in~\eqref{eq:A} to rewrite~\eqref{eq:rest1}$(f)$ as
\begin{equation*}
\begin{aligned}
\mathcal{A}_i&\bigl(\widetilde{\bm{x}}\cos(\widetilde{\varphi}_{\widetilde{k}\widetilde{m}} -\widetilde{\alpha})+\widetilde{\bm{\nu}}\sin(\widetilde{\varphi}_{\widetilde{k}\widetilde{m}} - \widetilde{\alpha}), 
    -\widetilde{\bm{x}}\sin(\widetilde{\varphi}_{\widetilde{k}\widetilde{m}} - \widetilde{\alpha}) + \widetilde{\bm{\nu}}\cos(\widetilde{\varphi}_{\widetilde{k}\widetilde{m}} - \widetilde{\alpha}), 
    \widetilde{y}, \widetilde{\bm{\varphi}}_{i}, \widetilde{\varphi}_{\widetilde{k}\widetilde{m}}\bigr) \nonumber \\
    &=\{j:L(\widetilde{\bm{x}}\cos(\widetilde{\varphi}_{ij} - \widetilde{\alpha})+\widetilde{\bm{\nu}}\sin(\widetilde{\varphi}_{ij} - \widetilde{\alpha}))\geq y, \quad j=1,\ldots,P\} \\
    &= \mathcal{A}_i(\widetilde{\bm{x}}, \widetilde{\bm{\nu}}, \widetilde{y}, \widetilde{\bm{\varphi}}_i, \widetilde{\alpha}).
\end{aligned}
\end{equation*}
Similarly, using again the formula for the sum of two angles, we rewrite~\eqref{eq:rest1}$(g)$ as
\sloppy \begin{equation*} 
\begin{aligned}
\mathcal{A}_{\widetilde{k}}&\bigl(\widetilde{\bm{x}}\cos(\widetilde{\varphi}_{\widetilde{k}\widetilde{m}} -\widetilde{\alpha})+\widetilde{\bm{\nu}}\sin(\widetilde{\varphi}_{\widetilde{k}\widetilde{m}} - \widetilde{\alpha}), 
    -\widetilde{\bm{x}}\sin(\widetilde{\varphi}_{\widetilde{k}\widetilde{m}} - \widetilde{\alpha}) + \widetilde{\bm{\nu}}\cos(\widetilde{\varphi}_{\widetilde{k}\widetilde{m}} - \widetilde{\alpha}), 
    \widetilde{y}, (\widetilde{\bm{\varphi}}_{\widetilde{k}_{-\widetilde{m}}}, \widetilde{\alpha}), \widetilde{\varphi}_{\widetilde{k}\widetilde{m}}\bigr) \\
    &= \{j:L(\widetilde{\bm{x}}\cos(\widetilde{\varphi}_{\widetilde{k}j} - \widetilde{\alpha})+\widetilde{\bm{\nu}}\sin(\widetilde{\varphi}_{\widetilde{k}j} - \widetilde{\alpha}))\geq y, \quad j=1,\ldots,P, \quad j \ne \widetilde{m}\} \\
    &\,\cup \{j:L(\widetilde{\bm{x}})\geq y, \quad j=\widetilde{m}\},
\end{aligned}
\end{equation*}
and then we have that (\ref{eq:rest1})$(e)$ and (\ref{eq:rest1})$(g)$ together imply (\ref{eq:rest2})$(v)$. 

Expression~\eqref{eq:rest1}$(h)$ means that $\widetilde{m}$ is the index of an accepted proposal. From the rewritten expression~\eqref{eq:rest1}$(g)$ above we have that $L(\widetilde{\bm{x}}) \geq \widetilde{y}$, and from (\ref{eq:rest1})$(e)$ we know that $\widetilde{y}>0$. Together they give (\ref{eq:rest2})$(iii)$.

Finally, (\ref{eq:rest2})$(v)$ is given because we can build $\mathcal{A}_i(\widetilde{\bm{x}}, \widetilde{\bm{\nu}}, \widetilde{y}, \widetilde{\bm{\varphi}}_i, \widetilde{\alpha})$ from the rewritten expression~\eqref{eq:rest1}$(f)$ above and~\eqref{eq:rest1}$(e)$ and from this last expression, $L(\widetilde{\bm{x}}) \geq \widetilde{y}$. From this, (\ref{eq:rest2})$(vi)$ is satisfied too. The proof for the only if direction is given by a similar reasoning. 
% \textcolor{blue}{Note to self: Verify that this proof is correctly written}
\end{proof}

\end{proof}

\section{Proof of Lemma~\ref{lem:recursive_mess}} \label{ap:proof_recursive_mess}

\begin{proof}[Proof of Lemma~\ref{lem:recursive_mess}]
    Properties~\eqref{eq:lordering_mess} and~\eqref{eq:rordering_mess} are shown by induction. We first prove that~\eqref{eq:lordering_mess} and~\eqref{eq:rordering_mess} are true for all $i$ when $n=1$. 
    For the base case $(i, n)=(1, 1)$, the interval's left endpoint definition~\eqref{eq:lUpdate_sh} gives
    \begin{equation*}
    \ell_{1}(\alpha, \bm{\varphi}_{1}) =
    \begin{cases}
      \max (\bm{\varphi}_{1j}:\bm{\varphi}_{1j}<\alpha), & \text{if } \bm{\varphi}_{1j} < \alpha \text{ for some } j=1, \ldots, M , \\[6pt]
      \ell_{0}(\alpha), & \text{if } \bm{\varphi}_{1j} \ge \alpha \text{ for all } j=1, \ldots, M,
    \end{cases} 
    \end{equation*}
    with $\bm{\varphi}_{1j} > \ell_{0}(\alpha)$ for all $j$ from the angle membership property in~\eqref{eq:membership_mess}. Finally, $\ell_{0}(\alpha) = 0 < \alpha$ by definition. 
    From the interval's right endpoint definition~\eqref{eq:rUpdate_sh} we have
    \begin{equation*}
    r_{1}(\alpha, \bm{\varphi}_{1}) =
    \begin{cases}
      r_{0}(\alpha), & \text{if } \bm{\varphi}_{1j} < \alpha \text{ for all } j=1,\ldots,M, \\[6pt]
      \min(\bm{\varphi}_{1j}) \le r_{0}(\alpha), & \text{if } \bm{\varphi}_{1j} \ge \alpha \text{ for some } j=1,\ldots,M,
    \end{cases}
    \end{equation*}
    with $r_{0}(\alpha) = 2\pi\ge \alpha$ by definition and $\varphi_{1j} \leq r_{0}(\alpha)$ for all $j$ from the angle membership in~\eqref{eq:membership_mess}. 
    
    In the inductive step we assume~\eqref{eq:lordering_mess} and~\eqref{eq:rordering_mess} to be valid for $(i, n) = (k-1, 1)$ and show that then the two relations are valid also for $(i, n) = (k, 1)$. 
    From the interval's left endpoint's definition~\eqref{eq:lUpdate_sh}, we have
    \begin{equation*}
    \ell_{k}(\alpha, \bm{\varphi}_{1:k}) =
    \begin{cases}
     \max(\varphi_{kj}:\varphi_{kj}<\alpha), & \text{if } \varphi_{kj} < \alpha \text{ for some } j=1,\ldots,M, \\[6pt]
     \ell_{k-1}(\alpha, \bm{\varphi}_{1:k-1}) , & \text{if } \varphi_{kj} \ge \alpha \text{ for all } j=1,\ldots,M,
    \end{cases} 
    \end{equation*}
    with $\varphi_{kj} \ge \ell_{k-1}(\alpha, \bm{\varphi}_{1:k-1})$ for all $j$ from the angle membership property~\eqref{eq:membership_mess} and $\ell_{k-1}(\alpha, \bm{\varphi}_{1:k-1}) < \alpha$ by the induction hypothesis. 
    From the interval's right endpoint's definition~\eqref{eq:rUpdate_sh} we have
    \begin{equation*}
    r_{k}(\alpha, \bm{\varphi}_{1:k}) =
    \begin{cases}
      r_{k-1}(\alpha, \bm{\varphi}_{1:k-1}), & \text{if } \varphi_{kj} < \alpha \text{ for all } j=1,\ldots,M, \\[6pt]
      \min(\varphi_{kj}:\varphi_{kj} \geq \alpha) , & \text{if } \varphi_{kj} \ge \alpha \text{ for some } j=1,\ldots,M,
    \end{cases}
    \end{equation*}
    with $\varphi_{kj} \le r_{k-1}(\alpha, \bm{\varphi}_{1:k-1})$ for all $j$ from ~\eqref{eq:membership_mess} 
    and $r_{k-1}(\alpha, \bm{\varphi}_{1:k-1}) \ge \alpha$ by the induction hypothesis. Therefore 
    \begin{align}\label{eq:lordering}
        \ell_{i-1}(\alpha, \bm{\varphi}_{1:i-j}) \leq \ell_{i}(\alpha, \bm{\varphi}_{1:i})< \alpha,\\
        \alpha \leq r_{i}(\alpha, \bm{\varphi}_{1:i}) \leq r_{i-1}(\alpha, \bm{\varphi}_{1:i-1}),\label{eq:rordering}.
    \end{align}
    holds for $i\ge 1$, where the first inequality in (\ref{eq:lordering_mess}) is strict if  $\varphi_{ij} < \alpha$ for at least some $j$.

    We next prove by induction that~\eqref{eq:lordering_mess} is true for a fixed $i$ and $1 \le j \le i$. 
    The base case $(i, j) = (i, 1)$ is given from~\eqref{eq:lordering}. 
    In the inductive step we assume~\eqref{eq:lordering_mess} and~\eqref{eq:rordering_mess} to be valid for $(i, j) = (i, k-1)$ and show that then the two relations are valid also for $(i, j) = (i, k)$.  Assume the inductive hypothesis 
    $$\ell_{i-(k-1)}(\alpha, \bm{\varphi}_{1:i-(k-1)}) = \ell_{i-k+1}(\alpha, \bm{\varphi}_{1:i-k+1}) \leq \ell_{i}(\alpha, \bm{\varphi}_{1:i})< \alpha$$
    for $k \le i$. 
    From~\eqref{eq:lordering} we have
    $$\ell_{i-k}(\alpha, \bm{\varphi}_{1:i-k}) \le \ell_{i-k+1}(\alpha, \bm{\varphi}_{1:i-k+1}) < \alpha$$
    for $i \ge 1$. Combining this property with the inductive hypothesis gives the desired
    $$\ell_{i-k}(\alpha, \bm{\varphi}_{1:i-k}) \le \ell_{i}(\alpha, \bm{\varphi}_{1:i}).$$
    The proof for~\eqref{eq:rordering_mess} is similar.
\end{proof}

\section{Proof of proposition~\ref{thm:stationary_mess}} \label{ap:proof_mess}
Recalling the expression~\eqref{eq:joint_before} for the joint forward density, we show here how it can be used together with the transformation formula to derive expression~\eqref{eq:joint_after} for the joint reverse density. 
First, from the unit Jacobian property of the transformation and the transformation formula, we obtain
\begin{equation}\label{eq:partial_joint_after}
\begin{split}
p_{\widetilde{\bm{x}},\widetilde{\bm{\phi}},\widetilde{m}}(\widetilde{\bm{x}},\widetilde{\bm{\nu}},\widetilde{y}, \widetilde{\alpha},
\widetilde{k}, \widetilde{\bm{\varphi}}_1,\ldots, \widetilde{\bm{\varphi}}_{\widetilde{k}},\widetilde{m}) &
= p_{\bm{x},\bm{\phi},m}(
\widetilde{\bm{x}}\cos(\widetilde{\varphi}_{\widetilde{k}\widetilde{m}} - \widetilde{\alpha})+\widetilde{\bm{\nu}}\sin(\widetilde{\varphi}_{\widetilde{k}\widetilde{m}} - \widetilde{\alpha}), \\
&\hspace{0.9cm} -\widetilde{\bm{x}}\sin(\widetilde{\varphi}_{\widetilde{k}\widetilde{m}} - \widetilde{\alpha})+\widetilde{\bm{\nu}}\cos(\widetilde{\varphi}_{\widetilde{k}\widetilde{m}} - \widetilde{\alpha}),\\
&\hspace{1.4cm}  \widetilde{y},
\widetilde{\varphi}_{\widetilde{k}\widetilde{m}}, \widetilde{k}, \widetilde{\bm{\varphi}}_1,
\ldots,
\widetilde{\bm{\varphi}}_{\widetilde{k}-1}, (\widetilde{\bm{\varphi}}_{{\widetilde{k}}_{-\widetilde{m}}}, \widetilde{\alpha}),\widetilde{m}).
\end{split}
\end{equation}
In the following we discuss how the various factors in the RHS of~\eqref{eq:partial_joint_after} become when inserting these transformed arguments.
% when using (\ref{eq:transformation_inverse}),

From the rotation-invariance of the Gaussian prior, the product of the two Gaussian factors becomes
\begin{equation} \label{eq:identGauss} 
\begin{split}
&\mathcal{N}(\widetilde{\bm{x}}\cos (\widetilde{\bm{\varphi}}_{\widetilde{k}} - \widetilde{\alpha})+\widetilde{\bm{\nu}}\sin(\widetilde{\bm{\varphi}}_{\widetilde{k}} - \widetilde{\alpha}); \bm{\mu}, \bm{\Sigma})
\cdot \mathcal{N}(-\widetilde{\bm{x}}\sin(\widetilde{\bm{\varphi}}_{\widetilde{k}} - \widetilde{\alpha})+\widetilde{\bm{\nu}}\cos(\widetilde{\bm{\varphi}}_{\widetilde{k}} - \widetilde{\alpha}); \bm{\mu}, \bm{\Sigma})\\ 
 &= \mathcal{N}(\widetilde{\bm{x}}; \bm{\mu}, \bm{\Sigma})\cdot \mathcal{N}(\widetilde{\bm{\nu}}; \bm{\mu}, \bm{\Sigma}),
\end{split}
\end{equation}
as mentioned in~\citet{Murray2010} and as can be easily verified by writing out the product of the Gaussian densities. 
\sloppy The next factors in (\ref{eq:joint_before}) to consider are $1/(r_i(\alpha, \bm{\varphi}_{1:i})-\ell_i(\alpha, \bm{\varphi}_{1:i}))$ for $i=0, \ldots, k-1$. Inserting the transformed arguments, the denominator in this expression becomes
\begin{align}\label{eq:uniformDensity}
    r_i(\widetilde{\varphi}_{\widetilde{k}\widetilde{m}}, \widetilde{\bm{\varphi}}_{0:i}) - \ell_i(\widetilde{\varphi}_{\widetilde{k}\widetilde{m}}, \widetilde{\bm{\varphi}}_{0:i}) = 
     r_i(\widetilde{\alpha}, \widetilde{\bm{\varphi}}_{0:i}) - \ell_i(\widetilde{\alpha}, \widetilde{\bm{\varphi}}_{0:i}),
\end{align}
where the equality follows from the flipping properties of Lemma~\ref{lem:recursive_mess}, which applies to the intervals and angles defined by the transformation because the  assumptions in Lemma~\ref{lem:recursive_mess} are fulfilled by the transformed support $\mathbb{I}\{T^{-1}(\widetilde{\bm{x}}, \widetilde{\bm{\phi}}, \widetilde{m})\in {\cal F}\}$. 
We then use the result in Lemma 2.1 to reformulate the indicator function, and specify what $P_{rs}(\bm{\psi}^{\uparrow})$ in~\eqref{eq:joint_before} becomes under the transformation.

As mentioned in the main text, to show invariance we need to sum over $m$ in~\eqref{eq:joint_before} and similarly over $\widetilde{m}$ in~\ref{eq:joint_after}. Defining $\mathcal{F}_{-m}$ as the set containing the support for $\bm{x}, \bm{\phi}$ only, it is possible to split the indicator function as a product of indicators, i.e.
\begin{equation}
    \mathbb{I}\{(\bm{x}, \bm{\phi}, m)\in \mathcal{F}\} = \mathbb{I}\{(\bm{x}, \bm{\phi})\in \mathcal{F}_{-m}\} \mathbb{I}\{m \in \mathcal{A}_k\},
\end{equation}
and analogously for the transformed variables. 
Then the sum for the forward density becomes
\begin{equation*}
\begin{split}
    \sum_{m} g_{\bm{x},\bm{\phi}}(\bm{x},\bm{\phi}) P_{rs}(\bm{\psi}^{\uparrow}) \mathbb{I}\{(\bm{x}, \bm{\phi}, m)\in \mathcal{F}\} 
    &= g_{\bm{x},\bm{\phi}}(\bm{x},\bm{\phi}) \mathbb{I}\{(\bm{x}, \bm{\phi})\in \mathcal{F}_{-m}\} \sum_{m} P_{h(\alpha; \bm{\psi}), h(\varphi_{km}; \bm{\psi})}(\bm{\psi}^{\uparrow})  \mathbb{I}\{m \in \mathcal{A}_k\} \\
    &= g_{\bm{x},\bm{\phi}}(\bm{x},\bm{\phi}) \mathbb{I}\{(\bm{x}, \bm{\phi})\in \mathcal{F}_{-m}\} \sum_{m \in \mathcal{A}_k} P_{h(\alpha; \bm{\psi}), h(\varphi_{km}; \bm{\psi})}(\bm{\psi}^{\uparrow})\\ 
    &=g_{\bm{x},\bm{\phi}}(\bm{x},\bm{\phi}) \mathbb{I}\{(\bm{x}, \bm{\phi})\in \mathcal{F}_{-m}\},
\end{split}
\end{equation*}
where the sum over all possible $m$ implies a summing the elements of row $r$ of the matrix. 
In the reverse path, under the transformation
\begin{equation*}
\begin{split}
    \sum_{\widetilde{m}} p_{\bm{x},\bm{\phi}}(\widetilde{\bm{x}}, \widetilde{\bm{\phi}}) P_{sr}(\bm{\psi}^{\uparrow}) \mathbb{I}\{(\widetilde{\bm{x}}, \widetilde{\bm{\phi}}, \widetilde{m})\in \mathcal{F}\}
    &= p_{\bm{x},\bm{\phi}}(\widetilde{\bm{x}}, \widetilde{\bm{\phi}}) \mathbb{I}\{(\widetilde{\bm{x}}, \widetilde{\bm{\phi}})\in \mathcal{F}_{-m}\} \sum_{\widetilde{m}} P_{h(\widetilde{\varphi}_{\widetilde{k}\widetilde{m}}; \widetilde{\bm{\psi}}), h(\widetilde{\alpha}; \widetilde{\bm{\psi}})}(\widetilde{\bm{\psi}}^{\uparrow}) \mathbb{I}\{\widetilde{m} \in \widetilde{\mathcal{A}}_{\widetilde{k}}\}\\
    &= p_{\bm{x},\bm{\phi}}(\widetilde{\bm{x}}, \widetilde{\bm{\phi}}) \mathbb{I}\{(\widetilde{\bm{x}}, \widetilde{\bm{\phi}})\in \mathcal{F}_{-m}\} \sum_{\widetilde{m} \in \widetilde{\mathcal{A}}_{\widetilde{k}}} P_{h(\widetilde{\varphi}_{\widetilde{k}\widetilde{m}}; \widetilde{\bm{\psi}}), h(\widetilde{\alpha}; \widetilde{\bm{\psi}})}(\widetilde{\bm{\psi}}^{\uparrow}) \\
    &= p_{\bm{x},\bm{\phi}}(\widetilde{\bm{x}}, \widetilde{\bm{\phi}}) \mathbb{I}\{(\widetilde{\bm{x}}, \widetilde{\bm{\phi}})\in \mathcal{F}_{-m}\} \sum_{\widetilde{m} \in \widetilde{\mathcal{A}}_{\widetilde{k}}} P_{h(\widetilde{\varphi}_{\widetilde{k}\widetilde{m}}; \widetilde{\bm{\psi}}), h(\widetilde{\alpha}; \widetilde{\bm{\psi}})}(\bm{\psi}^{\uparrow}) \\
    &=p_{\bm{x},\bm{\phi}}(\widetilde{\bm{x}}, \widetilde{\bm{\phi}}) \mathbb{I}\{(\widetilde{\bm{x}}, \widetilde{\bm{\phi}})\in \mathcal{F}_{-m}\},
\end{split}
\end{equation*}
where the equality in the third line follows from Remark~\ref{rem:psi_transformed}, and therefore the sum over $\widetilde{m}$ implies summing all elements in column $r$ of the matrix $\bm{P}(\bm{\psi}^{\uparrow})$, and is equal to one from the double-stochastic property. 
As a result, $(\bm{x},\bm{\nu},y,k, \alpha, \bm{\varphi}_1,\ldots,\bm{\varphi}_k)$ and $(\widetilde{\bm{x}},\widetilde{\bm{\nu}},\widetilde{y},\widetilde{k}, \widetilde{\alpha}, \widetilde{\bm{\varphi}}_1,\ldots,\widetilde{\bm{\varphi}}_{\widetilde{k}})$ are identically distributed, and thereby also the marginal distributions of $\bm{x}$ and $\widetilde{\bm{x}}$ are identical.

\section{Extended analysis of the toy solute transport example} \label{ap:toy_ad}

Consider the matrix model, approximation of a PDE for damped transport in fluids,
\begin{equation} \label{eq:ad_eq}
    \frac{d}{dt} \bm{\Theta} + (\bm{A} + \kappa \bm{I}) \bm{\Theta} = \bm{g}
\end{equation}
where the solution vector $\bm{\Theta} \in \mathbb{R}^d$ is the state variable (e.g. temperature, solute concentration), $\bm{A} \in \mathbb{M}^{[d\times d]}$ is a discretized version of the advection (transport) operator, $\kappa>0$ is a damping parameter, and $\bm{g} \in \mathbb{R}^d$ is a time-independent forcing term (e.g. heat source, dye injection or stirring). 
Last we choose $\bm{g} = \hat{\bm{e}}_1 = (1, 0, \ldots, 0)$ so that the energy injection is done only at mode 1 (the largest scale) and is cascaded to the other modes (smaller scales) by $\bm{A}$. 
 
We focus on the steady state of equation~\eqref{eq:ad_eq}, which satisfies the following
\begin{equation}\label{eq:ad_toy_sm}
    (\bm{A} + \kappa \bm{I}) \bm{\Theta} = \bm{g},
\end{equation}
and our goal is to use Bayesian inversion on~\eqref{eq:ad_toy_sm} to estimate the approximated advection operator $\bm{A}$.
Considering the partial observation model in~\eqref{eq:ad_obs_model},
where $P:\mathbb{R}^d \to \mathbb{R}^k$ is a projection operation so that $P(\bm{\Theta})=(\theta_{d_{0}-k}, \ldots, \theta_{d_{0}})$, we phrase the estimation problem as sampling from the posterior with measure
\begin{equation}
    \mu(d\bm{A}) \propto \exp(-\Phi(\bm{A}))\mu_0(d\bm{A}),
\end{equation}
with potential $\Phi(\bm{A}):= \sigma^{-2}|\bm{y}-P(\bm{\Theta}(\bm{A}))|^2/2$ that involves computing the solution vector
\begin{equation}
    \bm{\Theta}(\bm{A})=(\bm{A}+\kappa \bm{I})^{-1} \bm{g}
\end{equation}
at each iteration, what implies an expensive likelihood evaluation for large $d$. 

We choose prior measure  $\mu_0 := \mathcal{N}(\cdot; \bm{0}, \bm{C})$ with diagonal covariance operator $\bm{C} := \tau^2 \mbox{diag}(\bm{q})$, where $\bm{q} = (q_{ij})$ is a vector containing the variances of the elements $a_{ij}, \; 0 \leq i < j \leq d-1$ in the non-zero upper triangle of $\bm{A}$. More precisely, we define the elements of $\bm{q}$ as 
\begin{equation} \label{eq:ad_covariance}
    q_{ij} = \mbox{Var}(a_{ij}) := (ij)^{-\alpha}|i-j|^{-\gamma}, \quad 0 \leq i < j \leq d-1, \quad \gamma, \alpha > 0.
\end{equation}
With the modeling choice in~\eqref{eq:ad_covariance}, the magnitude of each element $a_{ij}$ will decay both with the difference between $i$ and $j$, discouraging long-range energy interactions, and with increasing indices $i$ and $j$, thereby penalizing energy transfer at higher frequencies. 

For this example, we set $\kappa=0.02$, $\alpha=3$, $\gamma=2$, $\sigma^2=0.25$, $\tau^2=2$, and generate datasets for $d \in \{10, 15, \ldots, 50\}$ using the model, as described next. 
Since our goal is to analyze the performance of the algorithms on the same dataset, but with higher resolution, we first generate $\bm{A}_{100}$, $\bm{z}\sim N_d(\bm{0}, \bm{I}_{100})$, $\bm{\varepsilon}\sim N_d(\bm{0}, \sigma_d^2\bm{I}_{100})$, and then build each $\bm{A}_d$,$\bm{\Theta}(\bm{A}_d)$, and the corresponding $\bm{y}_{d}$ by subsetting the upper $d\times d$ submatrix of $\bm{A}_{100}$ and the first $d$ elements of $\bm{z}_{100}$ and $\bm{\varepsilon}_{100}$. 
In addition, we set the observational scale with $d_0=6$ and $k=2$ for all $d$, corresponding to observing modes 4, 5, and 6 (the 4th, 5th and 6th elements of each $\bm{y}_d$). Figure~\ref{fig:ad_simulated_data} shows the generated $\bm{A}_d$,$\bm{\Theta}(\bm{A}_d)$, and $\bm{y}_d$ for $d\in\{10, 20, 40\}$.

\begin{figure}
    \centering
    \includegraphics[width=0.8\linewidth]{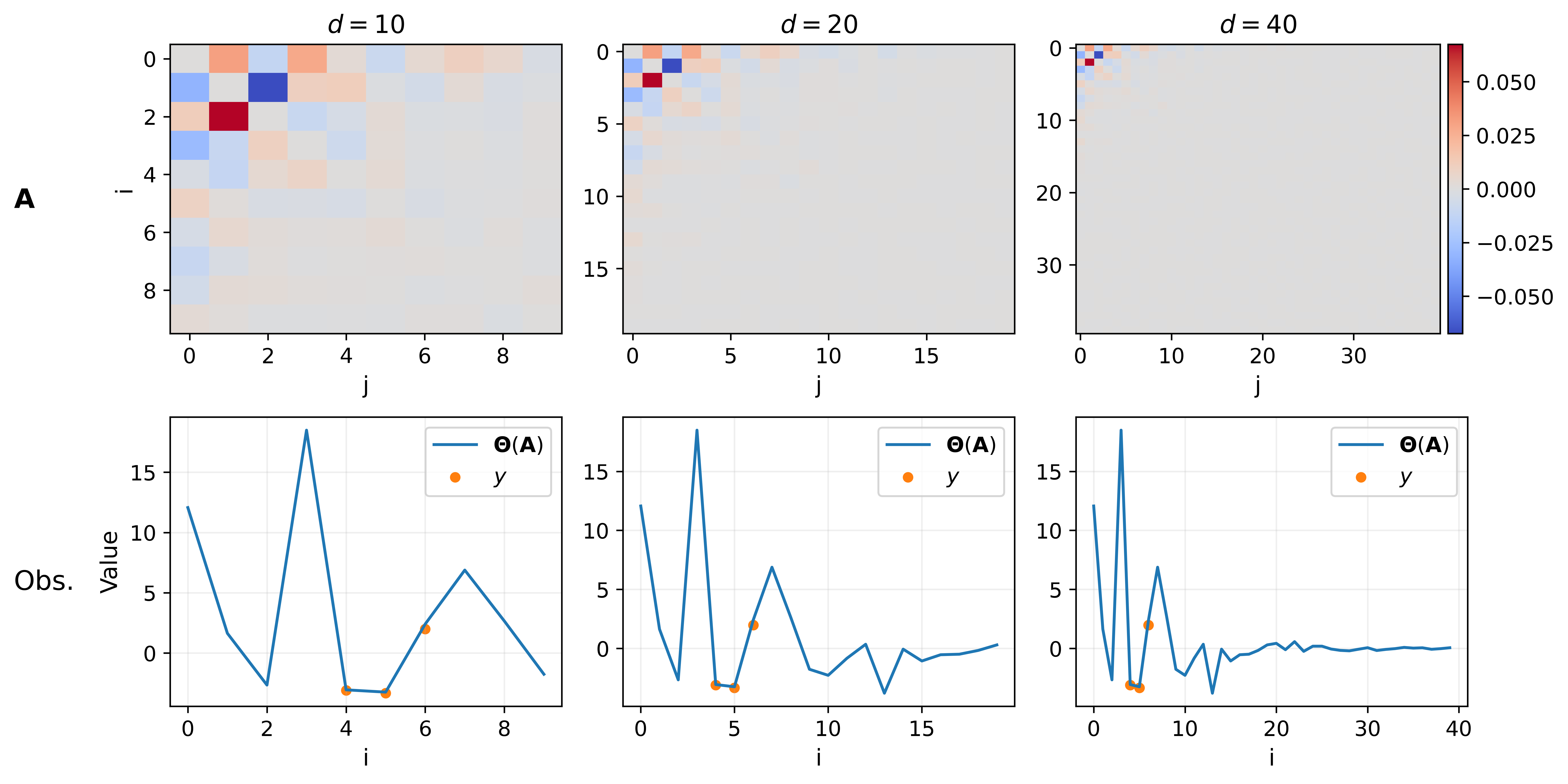}
    \caption{True simulated $\bm{A}_d$ and corresponding $\bm{\Theta}(\bm{A}_d)$ and observations $\bm{y}_{d}$ for dimensions $d\in\{10, 20, 40\}$.}
    \label{fig:ad_simulated_data}
\end{figure}

We then estimate the posteriors for each $d$ using MESS with $M\in[1, 10, 50, 100]$ and Metropolis-Hastings chains. To have a fair comparison with the prior-informed MESS, the MH proposal is a Gaussian centered around the current state with covariance matrix given as the product between a variance $\sigma_{mh}^2$ and the prior covariance $\bm{C}$. We tune the MH variance once, to give an acceptance rate of 23.4\% for $d=20$, and keep this value constant across the MH chains for every $d$. Fixing the MH proposal for all $d$ and fixing the observational scale as described in the paragraph above should ensure that the mixing of the chains is not compensated for, for increased dimension, by a better tuning or increased information. 

Figure~\ref{fig:ad_visualization} shows histograms of samples obtained with MESS with $M=100$ for $d=10$ and $d=50$. The histograms represent the marginal and joint pair-wise marginal distributions for elements $a_{01}-a_{04}$ in the first row 1 of $\bm{A}$, elements $a_{12}, a_{13}$ in the second row, and element $a_{23}$ in the third row. The elements $a_{01}, a_{12}$ and $a_{23}$ are contiguous to the main diagonal. The posterior exhibits a complicated geometry, exemplified by different kinds of multimodality and correlation structures observed in the histograms and pairwise densities. We note that the posterior geometry depends on the values chosen for the hyperparameters, the observational scale governed by $d_0$ and $k$, and of course the random seed used to create $\bm{z}_{100}$.  
\begin{figure}[t]
    \centering
    \begin{subfigure}{0.48\linewidth}
        \centering
        \includegraphics[width=\linewidth, trim={0.8cm 0 0 0}, clip]
        {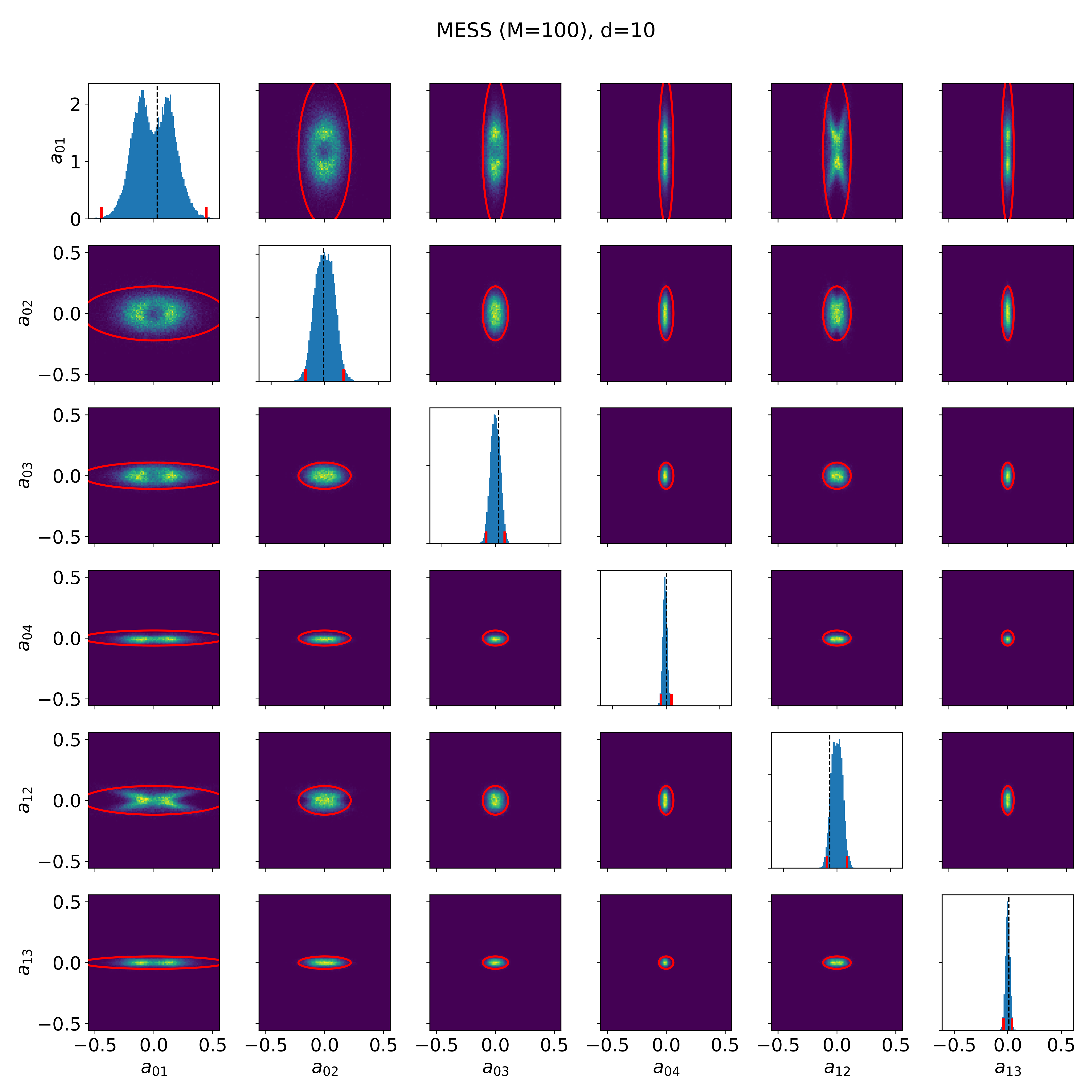}
    \end{subfigure}
    \hfill
    \begin{subfigure}{0.48\linewidth}
        \centering
        \includegraphics[width=\linewidth, trim={0.8cm 0 0 0}, clip]
        {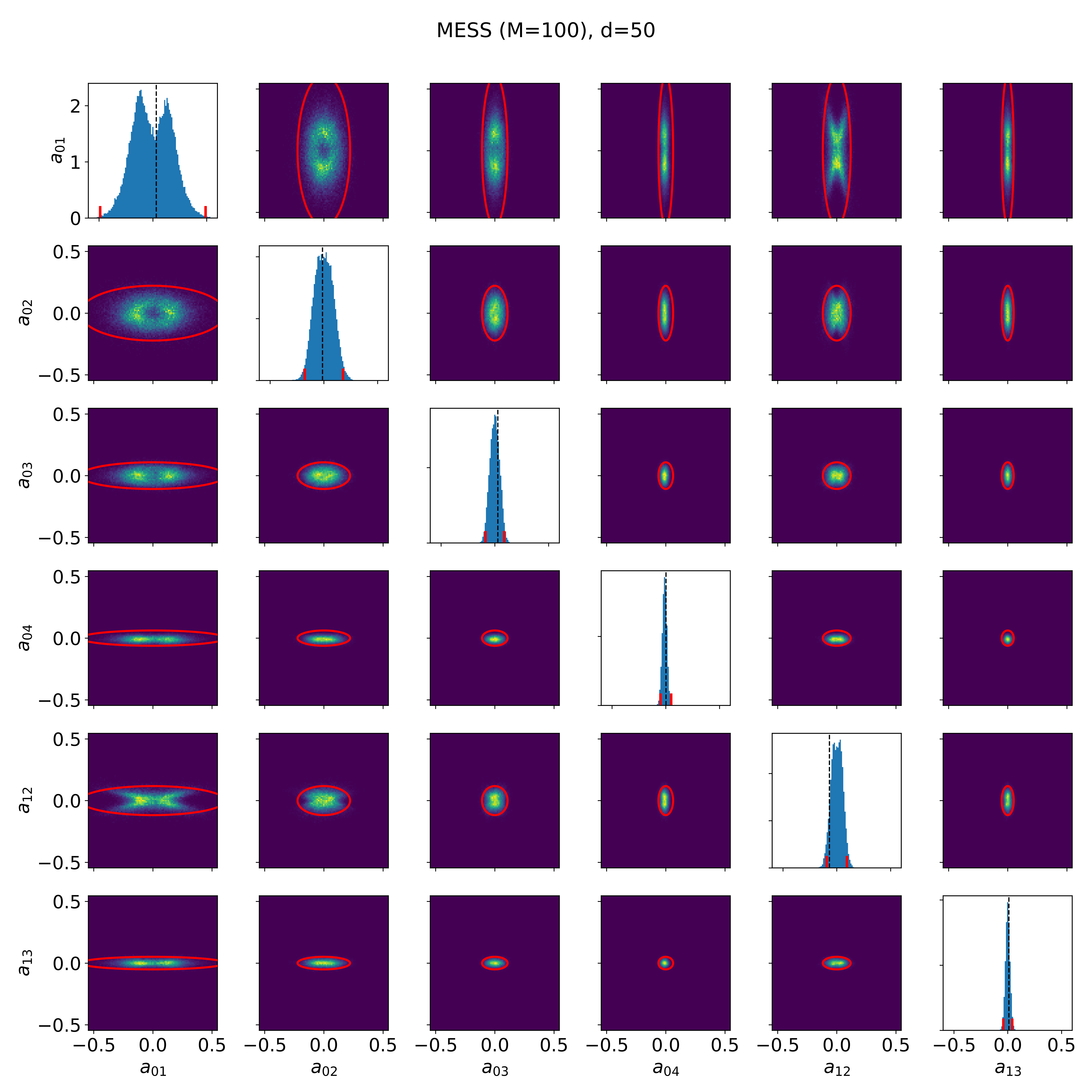}
    \end{subfigure}
    \caption{Posterior marginals and pair-wise densities for the same selected components of $\bm{A}_d$, for $d=10$ and $d=50$, illustrating that the complicated posterior geometry is preserved across dimensions. The samples were obtained with MESS ($M=100$).}
    \label{fig:ad_visualization}
\end{figure}

Figure~\ref{fig:ad_traceplots_a01} displays traceplots for component $a_{01}$ of $\bm{A}$. In the figure, each row is a value for $d$, and each column is an algorithm. The traceplots illustrate the different chain mixing behaviour: MESS ($M=50$) mixes better than MESS ($M=1$), and both mix better than Metropolis-Hasting (MH). In addition, the mixing does not degrade with dimension for both MESS variants, while it does so with MH. 

\begin{figure}
    \centering
    \includegraphics[width=1\linewidth]{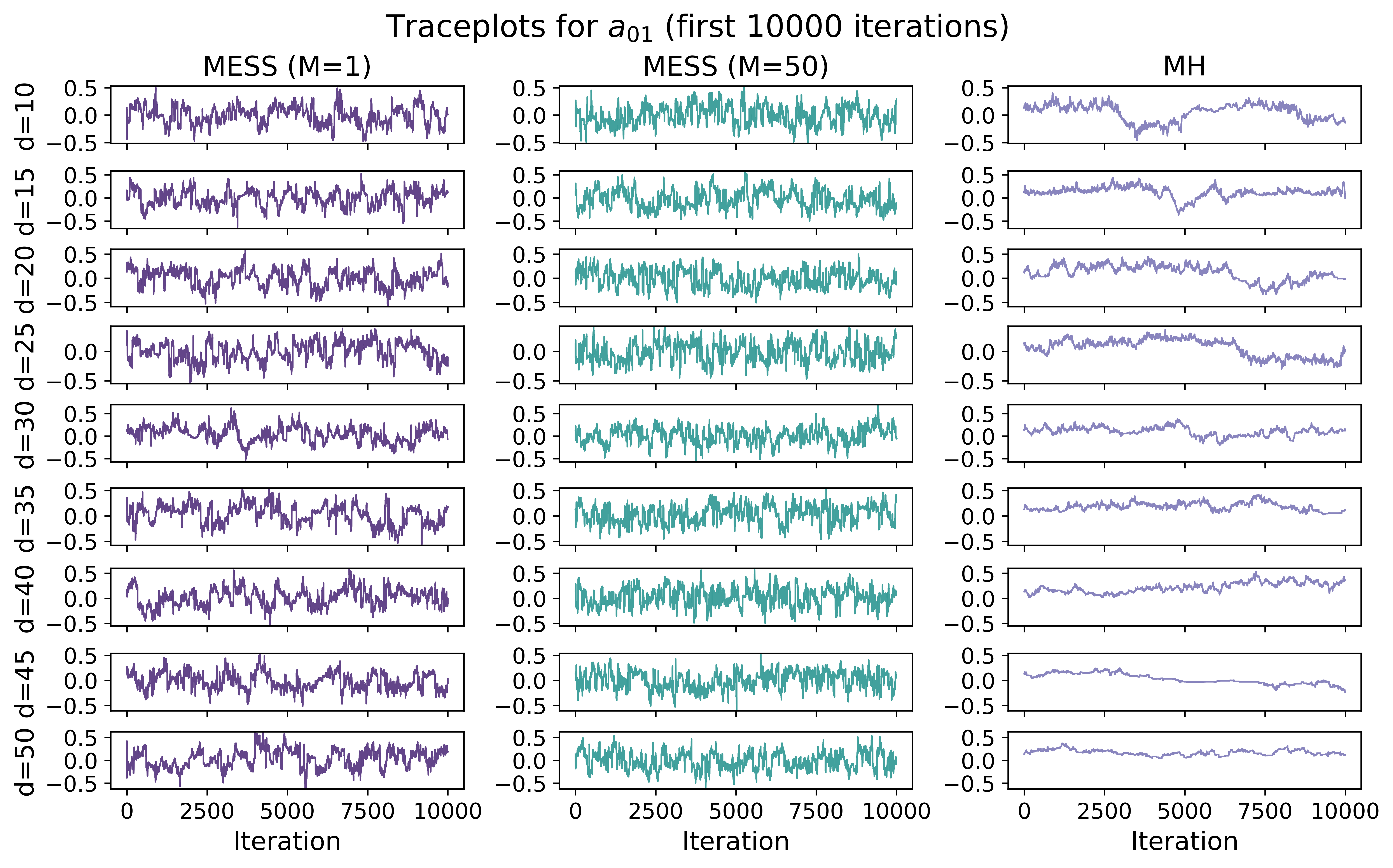}
    \caption{Traceplots comparing the mixing behaviour of different algorithms for increasing $d$, for the first element $a_{01}$ of the advection matrix $\bm{A}$.}
    \label{fig:ad_traceplots_a01}
\end{figure}

Finally, Table~\ref{tab:ess_msjd_ad_a02} contains the mean effective sample size and MSJD for component $a_{02}$ of $\bm{A}$ (see Section~\ref{sec:results} in the main text for the similar table for $a_{01}$), illustrating that for this component informing the transition matrix with angular or Euclidean distance improved the mixing in about 20\% for $M=50$. 
\begin{table} 
\centering
\small
\begin{tabular}{l|ccc|ccc}
\toprule
& \multicolumn{3}{c|}{Mean Eff. Sample Size} & \multicolumn{3}{c}{MSJD} \\
$M$ & Unif & Ang & Eucl & Unif & Ang & Eucl \\
\midrule
10 & 2094 & \textbf{2318} & 2077 & 0.00034 & 0.00031 & 0.00032 \\
50 & 2545 & 3095 & \textbf{3187} & 0.00046 & 0.00045 & 0.00046 \\
\bottomrule
\end{tabular}
\caption{ESS/MSJD for component $a_{02}$ at $d=10$ using the first 300k samples.}
\label{tab:ess_msjd_ad_a02}
\end{table}

\end{document}